\newtheorem{theorem}{Theorem}
\newtheorem{lemma}[theorem]{Lemma}
\newtheorem{corollary}[theorem]{Corollary}
\newcommand{\lcm}{{\mathrm{lcm}}}
\newcommand{\gf}{{\mathrm{GF}}}
\newcommand{\AGL}{{\mathrm{AGL}}}
\newcommand{\PAut}{{\mathrm{PAut}}} 
\newcommand{\MAut}{{\mathrm{MAut}}} 
\newcommand{\GAut}{{\mathrm{Aut}}}
\newcommand{\Sym}{{\mathrm{Sym}}}
\newcommand{\wt}{{\mathtt{wt}}}
\newcommand{\m}{\mathbb{M}}
\newcommand{\cP}{{\mathcal{P}}} 
\newcommand{\cB}{{\mathcal{B}}}
\newcommand{\C}{{\mathsf{C}}}
\newcommand{\cR}{{\mathcal{R}}}
\newcommand{\bzero}{{\mathbf{\bar{0}}}}
\newcommand{\bone}{{\mathbf{\bar{1}}}}
\newcommand{\bD}{{\mathbb{D}}}
\newcommand{\cJ}{{\mathtt{J}}}
\newcommand{\cN}{{\mathcal{N}}}
\begin{document}

\begin{frontmatter}



\title{An infinite family of Steiner systems $S(2, 4, 2^m)$ from cyclic codes  
\tnotetext[fn1]{C. Ding's research was supported by the Hong Kong Research Grants Council,
Proj. No. 16300415.}
}

\author{Cunsheng Ding}
\ead{cding@ust.hk}

\address{Department of Computer Science and Engineering, 
The Hong Kong University of Science and Technology,
Clear Water Bay, Kowloon, Hong Kong, China}

\begin{abstract}
Steiner systems are a fascinating topic of combinatorics. The most studied Steiner systems 
are $S(2, 3, v)$ (Steiner triple systems), $S(3, 4, v)$ (Steiner quadruple systems), and 
$S(2, 4, v)$. There are a few infinite families of Steiner systems $S(2, 4, v)$ in the 
literature. The objective of this paper is to present an infinite family of Steiner systems 
$S(2, 4, 2^m)$ for all $m \equiv 2 \pmod{4} \geq 6$ from cyclic codes. This may be the first 
coding-theoretic construction of an infinite family of Steiner systems $S(2, 4, v)$. As a 
by-product, many infinite families of $2$-designs are also reported in this paper.       
\end{abstract}

\begin{keyword}
Cyclc code \sep linear code \sep Steiner system \sep $t$-design.

\MSC  05B05 \sep 51E10 \sep 94B15 

\end{keyword}

\end{frontmatter}

\section{Introduction}

Let $\cP$ be a set of $v \ge 1$ elements, and let $\cB$ be a set of $k$-subsets of $\cP$, where $k$ is
a positive integer with $1 \leq k \leq v$. Let $t$ be a positive integer with $t \leq k$. The pair
$\bD = (\cP, \cB)$ is called a $t$-$(v, k, \lambda)$ {\em design\index{design}}, or simply {\em $t$-design\index{$t$-design}}, if every $t$-subset of $\cP$ is contained in exactly $\lambda$ elements of
$\cB$. The elements of $\cP$ are called points, and those of $\cB$ are referred to as blocks.
We usually use $b$ to denote the number of blocks in $\cB$.  A $t$-design is called {\em simple\index{simple}} if $\cB$ does not contain repeated blocks. In this paper, we consider only simple 
$t$-designs.  A $t$-design is called {\em symmetric\index{symmetric design}} if $v = b$. It is clear that $t$-designs with $k = t$ or $k = v$ always exist. Such $t$-designs are {\em trivial}. In this paper, we consider only $t$-designs with $v > k > t$.
A $t$-$(v,k,\lambda)$ design is referred to as a {\em Steiner system\index{Steiner system}} if $t \geq 2$ and $\lambda=1$, and is denoted by $S(t,k, v)$.

One of the interesting topics in $t$-designs is the study of Steiner systems $S(2, 4, v)$. 
It is known that a Steiner system $S(2, 4, v)$ exists if and only if $v \equiv 1 \mbox{ or } 4 \pmod{12}$ \cite{Hanani}. According to the surveys \cite{CMhb,RR10}, the following is a list 
of infinite families of Steiner systems $S(2,4, v)$: 
\begin{itemize}
\item $S(2, 4, 4^n)$, $n \geq 2$ (affine geometries). 
\item $S(2, 4, 3^n+\cdots+3+1)$, $n \geq 2$ (projective geometries). 
\item $S(2, 4, 2^{s+2}-2^s+4)$, $s >2$ (Denniston designs).  
\end{itemize}   
The objective of this paper is to present an infinite family of Steiner systems $S(2, 4, 2^m)$ 
for all $m \equiv 2 \mod{4} \geq 6$ with extended primitive cyclic codes. This may be the first 
coding-theory construction of an infinite family of Steiner systems $S(2, 4, v)$. As a by product, 
this paper will also construct a number of infinite families of $2$-designs with these binary codes.

\section{The classical construction of $t$-designs from codes}

We assume that the reader is familiar with the basics of linear codes and cyclic codes, and proceed to 
introduce the classical construction of $t$-designs from codes directly. 
Let $\C$ be a $[v, \kappa, d]$ linear code over $\gf(q)$. Let $A_i:=A_i(\C)$, which denotes the
number of codewords with Hamming weight $i$ in $\C$, where $0 \leq i \leq v$. The sequence 
$(A_0, A_1, \cdots, A_{v})$ is
called the \textit{weight distribution} of $\C$, and $\sum_{i=0}^v A_iz^i$ is referred to as
the \textit{weight enumerator} of $\C$. For each $k$ with $A_k \neq 0$,  let $\cB_k$ denote
the set of the supports of all codewords with Hamming weight $k$ in $\C$, where the coordinates of a codeword
are indexed by $(0,1,2, \cdots, v-1)$. Let $\cP=\{0, 1, 2, \cdots, v-1\}$.  The pair $(\cP, \cB_k)$
may be a $t$-$(v, k, \lambda)$ design for some positive integer $\lambda$, which is called a 
\emph{support design} of the code. In such a case, we say that the code $\C$ holds a $t$-$(v, k, \lambda)$ 
design. Throughout this paper, we denote the dual code of $\C$ by $\C^\perp$, and the extended code of 
$\C$ by $\overline{\C}$.   

\subsection{Designs from linear codes via the Assmus-Mattson Theorem}

The following theorem, developed by Assumus and Mattson, shows that the pair $(\cP, \cB_k)$ defined by 
a linear code is a $t$-design under certain conditions \cite{AM74}, \cite[p. 303]{HP03}.

\begin{theorem}[Assmus-Mattson Theorem]\label{thm-designAMtheorem}
Let $\C$ be a $[v,k,d]$ code over $\gf(q)$. Let $d^\perp$ denote the minimum distance of $\C^\perp$. 
Let $w$ be the largest integer satisfying $w \leq v$ and 
$$ 
w-\left\lfloor  \frac{w+q-2}{q-1} \right\rfloor <d. 
$$ 
Define $w^\perp$ analogously using $d^\perp$. Let $(A_i)_{i=0}^v$ and $(A_i^\perp)_{i=0}^v$ denote 
the weight distribution of $\C$ and $\C^\perp$, respectively. Fix a positive integer $t$ with $t<d$, and 
let $s$ be the number of $i$ with $A_i^\perp \neq 0$ for $0 \leq i \leq v-t$. Suppose $s \leq d-t$. Then 
\begin{itemize}
\item the codewords of weight $i$ in $\C$ hold a $t$-design provided $A_i \neq 0$ and $d \leq i \leq w$, and 
\item the codewords of weight $i$ in $\C^\perp$ hold a $t$-design provided $A_i^\perp \neq 0$ and 
         $d^\perp \leq i \leq \min\{v-t, w^\perp\}$. 
\end{itemize}
\end{theorem}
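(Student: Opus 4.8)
The plan is to recast the design property as a combinatorial invariance statement and then force that invariance through the MacWilliams identity. Fix a weight $i$ with $d \leq i \leq w$. First I would explain why $w$ enters the hypothesis: the defining inequality $w - \lfloor (w+q-2)/(q-1) \rfloor < d$ is exactly the condition guaranteeing that any two codewords of weight at most $w$ with the same support are scalar multiples of each other (for $q=2$ this holds vacuously, which is why $w=v$ in the binary case). Hence for such $i$ the supports of the weight-$i$ codewords are pairwise distinct and each arises from exactly $q-1$ codewords, so ``$\C$ holds a $t$-design on its weight-$i$ words'' is equivalent to the statement that, for every $t$-subset $T \subseteq \cP$, the number $N_i(T)$ of weight-$i$ codewords whose support contains $T$ is independent of $T$. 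I would take this $T$-invariance as the target.

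To compute $N_i(T)$ I would pass to shortened codes. For a subset $S \subseteq \cP$ let $\C_S$ denote the code obtained from $\C$ by keeping the codewords that vanish on $S$ and then deleting those coordinates, a code of length $v - |S|$. By inclusion--exclusion over the subsets $S \subseteq T$, the count $N_i(T)$ is a fixed linear combination of the numbers $A_i(\C_S)$ of weight-$i$ codewords in the codes $\C_S$ with $S \subseteq T$. Consequently it suffices to prove that, for each $j \leq t$, the entire weight distribution of $\C_S$ depends only on $|S| = j$ and not on the particular set $S$. This reformulated goal handles all admissible weights $i$ simultaneously.

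The main engine is then the MacWilliams identity together with the Pless power-moment identities. Shortening is dual to puncturing: $(\C_S)^\perp$ is exactly $\C^\perp$ punctured on $S$, so by MacWilliams the weight enumerator of $\C_S$ is a fixed invertible linear transform of the weight enumerator of the punctured dual. Now I would argue that, because $\C$ has minimum distance $d > t \geq |S|$ and $\C^\perp$ has only $s \leq d-t$ nonzero weights in the range up to $v-t$, the weight distribution of the punctured dual is overdetermined: the low-order power-moment identities supply at least as many independent linear constraints as there are possibly-nonzero weight coefficients, while all the data entering those constraints, namely the global coefficients $A_\ell$, $A_\ell^\perp$ and the size $|S|$, are $S$-independent. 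Uniqueness of the solution of this $S$-independent linear system then forces the weight distribution of $\C^\perp$ punctured on $S$, and therefore of $\C_S$, to be independent of $S$, which is precisely what is needed.

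I expect the crux, and the main obstacle, to be this uniqueness step: verifying that under $s \leq d-t$ the moment equations are genuinely non-degenerate and that the number of surviving weight coefficients in the punctured dual never exceeds the number of usable constraints, uniformly over all $S$ with $|S| \leq t$. Once this is in place the first assertion follows for every $i$ with $d \leq i \leq w$. The second assertion is obtained by running the identical argument with the roles of $\C$ and $\C^\perp$, and of $d$ and $d^\perp$, interchanged; the admissible range is then capped at $\min\{v-t, w^\perp\}$, where the bound $v-t$ reflects that only weights leaving room for a $t$-subset give a meaningful block count, and $w^\perp$ guarantees the support-distinctness of the relevant $\C^\perp$-codewords.
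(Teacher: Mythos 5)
The paper does not prove this theorem at all: it is quoted as a classical result, with citations to Assmus--Mattson and to Huffman--Pless, and is then used as a black box. So there is no in-paper proof to compare against; what follows measures your outline against the standard published argument. Your skeleton is the right one and matches that argument: the inequality defining $w$ is precisely the condition forcing two codewords of weight at most $w$ with equal supports to be scalar multiples (so each block is counted exactly $q-1$ times, and ``holds a $t$-design'' reduces to $T$-invariance of the codeword counts $N_i(T)$); the inclusion--exclusion reduction to $S$-independence of the weight distributions of the shortened codes is correct; and the engine is indeed MacWilliams plus the Pless power moments, with $s\le d-t$ supplying enough equations.

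The genuine gap is that the step you yourself call ``the crux, and the main obstacle'' is not carried out, and as you have set it up it does not quite close. You propose to pin down the weight distribution of $\C^\perp$ \emph{punctured} on $S$. But puncturing spreads each nonzero weight $j$ of $\C^\perp$ over the range $j-|S|,\dots,j$, so the punctured dual may have far more than $s$ possibly-nonzero weights, and the count ``at most $s\le d-t$ unknowns versus the available moment equations'' fails for that code. The classical proof works instead with $\C^\perp$ \emph{shortened} on $S$, whose dual is $\C$ punctured on $S$: the nonzero weights of the shortened dual are a subset of those of $\C^\perp$, no codeword of weight exceeding $v-|S|$ can vanish on $S$, so for $|S|=j\le t$ there are at most $s+(t-j)\le d-j$ unknown coefficients sitting at weight positions that are known in advance; meanwhile $\C$ punctured on $S$ has minimum distance at least $d-j$ and dimension exactly $k$ (since $j\le t<d$), which determines exactly the first $d-j$ power moments independently of $S$; the resulting square system is of generalized Vandermonde type and hence uniquely solvable. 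Without this bookkeeping --- including the dimension count needed to apply MacWilliams and the nonsingularity of the moment matrix --- the asserted $S$-invariance is unsupported. So: correct strategy, correct identification of where the difficulty lies, but the difficulty itself is left unresolved, and the puncture/shorten roles need to be interchanged for the counting to work.
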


The Assmus-Mattson Theorem is a very useful tool in constructing $t$-designs from linear codes, 
and has been recently employed to construct infinitely many $2$-designs and $3$-designs in 
\cite{DingLi16} and \cite{DingDesign16}.

\subsection{Designs from linear codes via the automorphism group}

In this section, we introduce the automorphism approach to obtaining $t$-designs from linear codes. 
To this end, we have to define the automorphism group of linear codes. We will also present some 
basic results about this approach.

The set of coordinate permutations that map a code $\C$ to itself forms a group, which is referred to as 
the \emph{permutation automorphism group\index{permutation automorphism group of codes}} of $\C$
and denoted by $\PAut(\C)$. If $\C$ is a code of length $n$, then $\PAut(\C)$ is a subgroup of the 
\emph{symmetric group\index{symmetric group}} $\Sym_n$.

A \emph{monomial matrix\index{monomial matrix}} over $\gf(q)$ is a square matrix having exactly one 
nonzero element of $\gf(q)$  in each row and column. A monomial matrix $M$ can be written either in 
the form $DP$ or the form $PD_1$, where $D$ and $D_1$ are diagonal matrices and $P$ is a permutation 
matrix. 

The set of monomial matrices that map $\C$ to itself forms the group $\MAut(\C)$,  which is called the 
\emph{monomial automorphism group\index{monomial automorphism group}} of $\C$. Clearly, we have 
$$
\PAut(\C) \subseteq \MAut(\C).
$$

The \textit{automorphism group}\index{automorphism group} of $\C$, denoted by $\GAut(\C)$, is the set 
of maps of the form $M\gamma$, 
where $M$ is a monomial matrix and $\gamma$ is a field automorphism, that map $\C$ to itself. In the binary 
case, $\PAut(\C)$,  $\MAut(\C)$ and $\GAut(\C)$ are the same. If $q$ is a prime, $\MAut(\C)$ and 
$\GAut(\C)$ are identical. In general, we have 
$$ 
\PAut(\C) \subseteq \MAut(\C) \subseteq \GAut(\C). 
$$

By definition, every element in $\GAut(\C)$ is of the form $DP\gamma$, where $D$ is a diagonal matrix, 
$P$ is a permutation matrix, and $\gamma$ is an automorphism of $\gf(q)$.   
The automorphism group $\GAut(\C)$ is said to be $t$-transitive if for every pair of $t$-element ordered 
sets of coordinates, there is an element $DP\gamma$ of the automorphism group $\GAut(\C)$ such that its 
permutation part $P$ sends the first set to the second set.

A proof of the following theorem can be found in \cite[p. 308]{HP03}. 
 
\begin{theorem}\label{thm-designCodeAutm}
Let $\C$ be a linear code of length $n$ over $\gf(q)$ where $\GAut(\C)$ is $t$-transitive. Then the codewords of any weight $i \geq t$ of $\C$ hold a $t$-design.
\end{theorem}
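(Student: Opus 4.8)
The plan is to reduce the statement to a standard orbit/counting argument for a transitive permutation group, the crucial point being that the support of a codeword is governed \emph{only} by the permutation part of an automorphism. First I would recall the structure noted just above the statement: every element of $\GAut(\C)$ can be written as $DP\gamma$, where $D=\mathrm{diag}(d_0,\dots,d_{n-1})$ is a nonsingular diagonal matrix, $P$ is a permutation matrix, and $\gamma$ is a field automorphism of $\gf(q)$. I would then observe that for any codeword $\bc$, applying $\gamma$ entrywise and scaling by the nonzero entries $d_j$ leaves the set of nonzero coordinates unchanged, since $\gamma(x)=0 \iff x=0$ and $d_j x = 0 \iff x=0$. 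Consequently $\support(DP\gamma(\bc)) = P(\support(\bc))$; that is, on supports only the permutation part $P$ acts, and it acts by permuting the coordinate set $\cP=\{0,1,\dots,n-1\}$.

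Next I would let $G\le \Sym_n$ be the image of $\GAut(\C)$ under the natural homomorphism that forgets the diagonal and field-automorphism parts and retains the permutation part $P$; thus $G$ is a subgroup of $\Sym_n$. Fix a weight $i\ge t$ with $A_i\neq 0$ (the case $A_i=0$ being vacuous). For a weight-$i$ codeword $\bc$ and any $g=DP\gamma\in\GAut(\C)$, the image $g(\bc)$ again lies in $\C$ and again has weight $i$, with $\support(g(\bc))=P(\support(\bc))$. Hence the permutation group $G$ carries the block set $\cB_i$ (the collection of supports of weight-$i$ codewords) into itself, and being a group it permutes $\cB_i$. The hypothesis that $\GAut(\C)$ is $t$-transitive is precisely the assertion that $G$ acts $t$-transitively on $\cP$, and therefore also $t$-homogeneously, since any $t$-subset can be ordered and mapped onto any other.

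Finally I would run the counting argument. For a $t$-subset $T\subseteq\cP$, let $\lambda(T)$ denote the number of blocks $B\in\cB_i$ with $T\subseteq B$. Given two $t$-subsets $T_1,T_2$, pick $g\in G$ with $g(T_1)=T_2$; since $B\mapsto g(B)$ is a bijection of $\cB_i$ and $T_1\subseteq B \iff T_2\subseteq g(B)$, it restricts to a bijection between the blocks containing $T_1$ and those containing $T_2$, whence $\lambda(T_1)=\lambda(T_2)$. Thus $\lambda(T)$ equals a constant $\lambda$ independent of $T$, and $(\cP,\cB_i)$ is a $t$-$(n,i,\lambda)$ design. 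I expect the only genuine subtlety — and the step deserving the most care — to be the first one: confirming that the diagonal and field-automorphism parts preserve supports, so that the action on $\cB_i$ factors through the permutation parts alone. Once that is in place, the reduction to homogeneity and the counting of blocks are routine, and the value of $\lambda$ need not be computed.
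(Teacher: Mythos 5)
Your proof is correct and is essentially the argument the paper relies on: the paper itself gives no proof of Theorem~\ref{thm-designCodeAutm}, deferring to \cite[p.~308]{HP03}, and the standard proof there is exactly your reduction --- the diagonal and field-automorphism parts of $DP\gamma$ preserve supports, so the permutation parts act on $\cB_i$, and $t$-transitivity (as defined in the paper via the permutation part $P$) makes the number of blocks through a $t$-subset constant. The one step worth flagging as correctly handled is that you use only the induced action on the \emph{set} of supports, so non-injectivity of the codeword-to-support map causes no trouble.
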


This theorem gives another sufficient condition for a linear code to hold $t$-designs. To apply Theorem 
\ref{thm-designCodeAutm}, we have to determine the automorphism group of $\C$ and show that it is $t$-transitive. 
It is in general very hard to find out the automorphism group of a linear code. Even if we known that a 
linear code holds $t$-$(v, k, \lambda)$ designs, determining the parameters $k$ and $\lambda$ could be 
extremely difficult. All the $2$-designs presented in this paper are obtained 
from this automorphism group approach. 

The next theorem will be employed later and is a very useful and general result  \cite[p. 165]{MS77}. 

\begin{theorem}\label{thm-dualdesignbinary}
Let $\C$ be an $[n, k, d]$ binary linear code with $k>1$, such that for each weight $w>0$ the supports of the codewords of weight $w$ form a $t$-design, where $t<d$. Then the supports of the codewords of each nonzero weight in $\C^\perp$ also form a $t$-design.   
\end{theorem}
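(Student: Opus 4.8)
The plan is to recast the hypothesis ``$\C$ holds $t$-designs'' as an invariance statement about the weight distributions of certain subcodes, and then to transport that invariance to $\C^\perp$ by combining the MacWilliams identity with the duality between puncturing and shortening. Throughout, for a coordinate set $S$ write $\C^{S}$ for the code obtained by \emph{puncturing} $\C$ at the positions of $S$ (deleting those coordinates) and $\C_{S}$ for the code obtained by \emph{shortening} at $S$ (keeping the codewords vanishing on $S$ and then deleting the positions of $S$). The two standard facts I would lean on are the identity $(\C^{S})^\perp = (\C^\perp)_{S}$, and the observation that a $t$-design is automatically a $j$-design for every $j \leq t$, so that the number $N_i(R)$ of weight-$i$ codewords of $\C$ whose support contains a fixed set $R$ with $|R| \leq t$ is a constant $\lambda^{(i)}_{|R|}$ depending only on $i$ and $|R|$.

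First I would fix a set $S$ with $|S| = j \leq t$ and compute the weight distribution of $\C^{S}$. Since $t < d$, every nonzero codeword has weight exceeding $|S|$, so no nonzero codeword is supported inside $S$; hence the puncturing map $\pi_S$ is injective and $\wt(\pi_S(c)) = \wt(c) - |\support(c) \cap S|$. Using inclusion--exclusion over the subsets $R \subseteq S$ together with the constancy of the $\lambda^{(i)}_{|R|}$, the number of weight-$i$ codewords of $\C$ meeting $S$ in exactly $r$ positions is seen to depend only on $i$, $j$ and $r$, not on the particular set $S$. Summing over $i$ then shows that the entire weight distribution of $\C^{S}$ depends only on $j$. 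This is the essential forward translation of the design hypothesis, and it is the one step that genuinely uses $t < d$.

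Next I would apply the MacWilliams identity to $\C^{S}$: because the injectivity gives $|\C^{S}| = |\C|$, and both the length $n - j$ and the weight distribution of $\C^{S}$ depend only on $j$, the weight distribution of $(\C^{S})^\perp$ likewise depends only on $j$. By the puncturing--shortening duality, $(\C^{S})^\perp = (\C^\perp)_{S}$, and the weight distribution of the shortened code $(\C^\perp)_{S}$ records, for each weight $w$, precisely the number of weight-$w$ codewords of $\C^\perp$ that vanish on $S$. Thus for every $t$-set $S$ the number of weight-$w$ codewords of $\C^\perp$ avoiding $S$ is constant. Since avoiding $S$ is equivalent to $S$ lying in the complement of the support, this says the support complements of the weight-$w$ codewords of $\C^\perp$ form a $t$-design; passing to complementary blocks (which sends a $t$-design to a $t$-design) gives that the supports themselves form a $t$-design, as claimed. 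The hypothesis $k > 1$ is invoked only to discard the degenerate case in which $\C$ carries no genuine design.

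I expect the main obstacle to be bookkeeping rather than conceptual: getting the two passages between ``holds a $t$-design'' and ``subcode weight distribution is coordinate-independent'' exactly right, in particular tracking the inclusion--exclusion that converts constant containment counts $\lambda^{(i)}_{|R|}$ into constant exact-intersection counts, and invoking the complementary-design step in the correct direction for $\C^\perp$. Everything beyond the injectivity of $\pi_S$ is the MacWilliams transform together with elementary counting, so once the translations are pinned down the argument should close cleanly.
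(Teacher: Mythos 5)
The paper does not actually prove this theorem---it is quoted from \cite[p.~165]{MS77}---so the only benchmark is the classical MacWilliams--Sloane argument, which is precisely what you have reconstructed; your proof is correct. The chain (design hypothesis $\Rightarrow$ the weight distribution of $\C$ punctured on a $j$-set depends only on $j\le t$, with injectivity of puncturing coming from $t<d$; the MacWilliams identity together with $(\C^{S})^\perp=(\C^\perp)_{S}$ $\Rightarrow$ the number of weight-$w$ codewords of $\C^\perp$ vanishing on $S$ depends only on $|S|$; then recover the design property for the supports) is the standard one, and the inclusion--exclusion converting containment counts into exact-intersection counts is deployed correctly. The one point worth tightening is the final complementation step: for weights $w>n-t$ the complements of the supports have size less than $t$, so ``the complements form a $t$-design'' is vacuous and cannot be inverted; but since your argument gives constancy of the number of weight-$w$ dual codewords avoiding $S$ for \emph{every} $|S|=j\le t$, a one-line inclusion--exclusion over the subsets of a fixed $t$-set shows directly that the number of weight-$w$ dual codewords whose support \emph{contains} that $t$-set is constant, which handles all weights uniformly and avoids the complementary-design detour altogether.
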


\section{Affine-invariant linear codes} 

In this section, we first give a special representation of primitive cyclic codes and their extended codes, 
and then define and characterise affine-invariant codes. We will skip proof details, but 
refer the reader to \cite[Section 4.7]{HP03} for a detailed proof of the major results presented in this section. 

A cyclic code of length $n=q^m-1$ over $\gf(q)$ for some positive integer $m$ is called a \emph{primitive 
cyclic code}. Let $\cR_n$ denote the quotient ring $\gf(q)[x]/(x^n-1)$. Any cyclic code $\C$ of length 
$n=q^m-1$ over $\gf(q)$ is an ideal of $\cR_n$, and is generated by a monic polynomial $g(x)$ of the 
least degree over $\gf(q)$. This polynomial is called the generator polynomial of the cyclic code $\C$, and 
can be expressed as 
$$ 
g(x)=\prod_{t \in T} (x-\alpha^t), 
$$ 
where $\alpha$ is a generator of $\gf(q^m)^*$, $T$ is a subset of $\cN=\{0,1, \cdots, n-1\}$ and a union 
of some $q$-cyclotomic cosets modulo $n$. 
The set $T$ is called a \emph{defining set} of $\C$ with respect to $\alpha$. 
When $\C$ is viewed as a subset of $\cR_n$, 
every codeword of $\C$ is a polynomial 
$c(x)=\sum_{i=0}^{n-1} c_i x^i$, where all 
$c_i \in \gf(q)$. 
A primitive cyclic code $\C$ is called \emph{even-like} if $1$ is a zero of its generator polynomial, 
and \emph{odd-like} otherwise. 

Let $\cJ$ and $\cJ^*$ denote $\gf(q^m)$ and $\gf(q^m)^*$, respectively. Let $\alpha$ be a primitive element 
of $\gf(q^m)$. The set $\cJ$ will be the index set of the extended cyclic codes of length $q^m$, and the set 
$\cJ^*$ will be the index set of the cyclic codes of length $n$. Let $X$ be an indeterminate. Define 
\begin{eqnarray}
\gf(q)[\cJ]=\left\{a=\sum_{g \in \cJ} a_g X^g: a_g \in \gf(q) \mbox{ for all } g \in \cJ \right\}. 
\end{eqnarray}    
The set $\gf(q)[\cJ]$ is an algebra under the following operations 
\begin{eqnarray*}
u \sum_{g \in \cJ} a_g X^g + v \sum_{g \in \cJ} b_g X^g = \sum_{g \in \cJ} (ua_g +v b_g) X^g  
\end{eqnarray*} 
for all $u, \, v \in \gf(q)$, and 
\begin{eqnarray}
\left(\sum_{g \in \cJ} a_g X^g \right) \left(\sum_{g \in \cJ} b_g X^g \right) 
= \sum_{g \in \cJ} \left(\sum_{h \in \cJ} a_h b_{g-h} \right) X^g. 
\end{eqnarray} 
The zero and unit of $\gf(q)[\cJ]$ are $\sum_{g \in \cJ} 0 X^g$ and $X^0$, respectively. 

Similarly, let 
\begin{eqnarray}
\gf(q)[\cJ^*]=\left\{a=\sum_{g \in \cJ^*} a_g X^g: a_g \in \gf(q) \mbox{ for all } g \in \cJ^* \right\}. 
\end{eqnarray}   
The set $\gf(q)[\cJ^*]$ is not a subalgebra, but a subspace of $\gf(q)[\cJ]$. Obviously, the elements of 
$\gf(q)[\cJ^*]$ are of the form 
$$ 
\sum_{i=0}^{n-1} a_{\alpha^i} X^{\alpha^i}, 
$$  
and those of 
$\gf(q)[\cJ]$ are of the form 
$$ 
a_0X^0 + \sum_{i=0}^{n-1} a_{\alpha^i} X^{\alpha^i}.  
$$  
Subsets of the subspace $\gf(q)[\cJ^*]$ will be used to characterise primitive cyclic codes over $\gf(q)$ and 
those of the algebra $\gf(q)[\cJ]$ will be employed to characterise extended primitive cyclic codes over $\gf(q)$. 

We define a one-to-one correspondence between $\cR_n$ and $\gf(q)[\cJ^*]$ by 
\begin{eqnarray}\label{eqn-Upsilon}
\Upsilon: c(x)=\sum_{i=0}^{n-1} c_i x^i \to C(X)=\sum_{i=0}^{n-1} C_{\alpha^i} X^{\alpha^i},
\end{eqnarray} 
where $C_{\alpha^i}=c_i$ for all $i$. 

The following theorem is obviously true. 

\begin{theorem}\label{thm-newCharacterisationCyclicCodes}
$\C \subseteq \cR_n$ has the circulant cyclic shift property if and only if $\Upsilon(\C) \subseteq \gf(q)[\cJ^*]$ has the property 
that 
$$ 
\sum_{i=0}^{n-1} C_{\alpha^i} X^{\alpha^i} = \sum_{g \in \cJ^*} C_g X^g \in \Upsilon(\C)  
$$ 
if and only if 
$$ 
\sum_{i=0}^{n-1} C_{\alpha^i} X^{\alpha \alpha^i} = \sum_{g \in \cJ^*} C_g X^{\alpha g} \in \Upsilon(\C)  
$$ 
\end{theorem}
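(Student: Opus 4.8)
The plan is to observe that the statement merely records how the relabeling map $\Upsilon$ transports the cyclic shift on $\cR_n$ into a simple exponent-substitution on $\gf(q)[\cJ^*]$, so that the whole assertion reduces to two elementary facts. This is why the author can call it ``obviously true'': there is no real content beyond bookkeeping, but it is worth laying out cleanly.

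First I would record that $\Upsilon$ is a bijection. Since $\alpha$ is a primitive element of $\gf(q^m)$, the map $i \mapsto \alpha^i$ is a bijection from $\cN=\{0,1,\cdots,n-1\}$ onto $\cJ^*=\gf(q^m)^*$, and $\Upsilon$ is nothing but the induced relabeling of coefficients via $C_{\alpha^i}=c_i$. Hence $\Upsilon$ is a $\gf(q)$-linear bijection from $\cR_n$ onto $\gf(q)[\cJ^*]$; in particular it both preserves and reflects containment, so for any subsets $A,B$ we have $A\subseteq B$ if and only if $\Upsilon(A)\subseteq\Upsilon(B)$.

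Next I would identify the operator induced by the cyclic shift. Let $S:\cR_n\to\cR_n$ denote multiplication by $x$, sending $\sum_{i} c_i x^i$ to $\sum_{i} c_i x^{i+1}$ with exponents read modulo $n$. Because $\alpha$ has order $n=q^m-1$, we have $\alpha^n=1$, so reducing the exponent $i+1$ modulo $n$ is harmless at the level of field elements: $\alpha^{(i+1)\bmod n}=\alpha\cdot\alpha^i$. Applying $\Upsilon$ therefore yields
$$
\Upsilon(S(c(x)))=\sum_{i=0}^{n-1} c_i X^{\alpha\alpha^i}=\sum_{g\in\cJ^*} C_g X^{\alpha g},
$$
which is exactly the substitution $\sigma: X^g\mapsto X^{\alpha g}$ applied to $\Upsilon(c(x))=\sum_{g} C_g X^g$. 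In short $\Upsilon\circ S=\sigma\circ\Upsilon$, so $\Upsilon$ intertwines the cyclic shift with $\sigma$.

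Combining the two facts finishes the argument. The code $\C$ has the circulant cyclic shift property precisely when $S(\C)\subseteq\C$; transporting through the bijection $\Upsilon$ and using the intertwining relation, this is equivalent to $\sigma(\Upsilon(\C))\subseteq\Upsilon(\C)$, which is the forward implication of the stated property of $\Upsilon(\C)$. Since $\sigma$ is a bijection of finite order $n$ (again because $\alpha^n=1$), invariance in one direction forces the reverse inclusion as well, upgrading the containment to the full ``if and only if'' displayed in the theorem. I do not expect any genuine obstacle; the only point demanding care is checking that the modular reduction $x^n=1$ in $\cR_n$ is matched exactly by $\alpha^n=1$ in $\cJ^*$, so that $\sigma$ is well defined on the index set and no wrap-around term is lost.
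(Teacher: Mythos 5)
Your proposal is correct: the paper offers no proof at all (it simply declares the theorem ``obviously true''), and your argument---that $\Upsilon$ is a $\gf(q)$-linear bijection intertwining multiplication by $x$ in $\cR_n$ with the substitution $X^g \mapsto X^{\alpha g}$, whence shift-invariance transports across and the finite order of $\sigma$ upgrades the containment to an equivalence---is exactly the bookkeeping the author intends the reader to supply. Nothing is missing.
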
 

With Theorem \ref{thm-newCharacterisationCyclicCodes}, every primitive cyclic code over $\gf(q)$ can be  
viewed as a special 
subset of $\gf(q)[\cJ^*]$ having the property documented in this theorem. This special representation of 
primitive cyclic codes over $\gf(q)$ will be very useful for determining a subgroup of the automorphism group 
of certain primitive cyclic codes.  

It is now time to extend primitive cyclic codes, which are subsets of $\gf(q)[\cJ^*]$. We use the element $0 
\in \cJ$ to index the extended coordinate. 
The extended codeword 
$\overline{C}(X)$ of a codeword $C(X)=\sum_{g \in \cJ^*} C_g X^g$ in $\gf(q)[\cJ^*]$ is defined by 
\begin{eqnarray}
\overline{C}(X)=\sum_{g \in \cJ} C_g X^g
\end{eqnarray} 
with $\sum_{g \in \cJ} C_g=0.$

Notice that $X^{\alpha 0}=X^0=1$. The following then follows from Theorem \ref{thm-newCharacterisationCyclicCodes}. 

\begin{theorem}\label{thm-newCharacterisationExtendCyclicCodes}
The extended code $\overline{\C}$ of a cyclic code $\C \subseteq \gf(q)[\cJ^*]$ is a subspace of 
$\gf(q)[\cJ]$ such that 
\begin{eqnarray*}
 \overline{C}(X)=\sum_{g \in \cJ} C_g X^g \in \overline{\C} \mbox{ if and only if } 
 \sum_{g \in \cJ} C_g X^{\alpha g} \in \overline{\C} \mbox{ and } 
\sum_{g \in \cJ} C_g=0. 
\end{eqnarray*} 
\end{theorem}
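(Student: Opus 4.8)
The plan is to read this as a direct transfer of Theorem \ref{thm-newCharacterisationCyclicCodes} to the extended setting, hinging on the single fact emphasized just above the statement: the map $g \mapsto \alpha g$ on $\cJ = \gf(q^m)$ fixes the point $0$ (so $X^{\alpha 0} = X^0$) while cyclically permuting $\cJ^* = \{\alpha^0, \ldots, \alpha^{n-1}\}$. Hence the multiply-by-$\alpha$ action on $\gf(q)[\cJ]$ leaves the extended coordinate untouched and acts exactly as the cyclic shift on the remaining coordinates. That $\overline{\C}$ is a subspace of $\gf(q)[\cJ]$ is immediate, since $C(X) \mapsto \overline{C}(X)$ is $\gf(q)$-linear (the adjoined value $C_0 = -\sum_{g \in \cJ^*} C_g$ depends linearly on the coefficients).

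First I would prove the forward implication. Suppose $\overline{C}(X) = \sum_{g \in \cJ} C_g X^g \in \overline{\C}$. By definition of the extension, $\sum_{g \in \cJ} C_g = 0$, which supplies the parity condition at once. Moreover the restriction $C(X) = \sum_{g \in \cJ^*} C_g X^g$ is the cyclic codeword whose extension is $\overline{C}(X)$, so $C(X) \in \C$, and Theorem \ref{thm-newCharacterisationCyclicCodes} gives $\sum_{g \in \cJ^*} C_g X^{\alpha g} \in \C$. Writing $\sum_{g \in \cJ} C_g X^{\alpha g} = C_0 X^0 + \sum_{g \in \cJ^*} C_g X^{\alpha g}$ (using $X^{\alpha 0} = X^0$), I would observe that $g \mapsto \alpha g$ merely reindexes $\cJ^*$, so the sum of the nonzero coefficients is unchanged; hence the extended coordinate required to restore parity for the shifted codeword is again $-\sum_{g \in \cJ^*} C_g = C_0$. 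Therefore $\sum_{g \in \cJ} C_g X^{\alpha g}$ is precisely the extension of the shifted codeword and lies in $\overline{\C}$.

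For the converse I would invert the shift. The permutation $g \mapsto \alpha g$ of $\cJ$ has order $n = q^m - 1$, because it fixes $0$ and $\alpha$ has order $n$ on $\cJ^*$; thus the induced shift on $\gf(q)[\cJ]$ satisfies $X^{\alpha^n g} = X^g$ for all $g$. So, assuming $\sum_{g \in \cJ} C_g X^{\alpha g} \in \overline{\C}$ (which already forces $\sum_{g \in \cJ} C_g = 0$), I would apply the forward closure just established $n-1$ further times to obtain $\sum_{g \in \cJ} C_g X^{\alpha^n g} = \sum_{g \in \cJ} C_g X^g \in \overline{\C}$, closing the equivalence.

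I do not anticipate a genuine obstacle. The one point demanding care—and exactly the reason the remark $X^{\alpha 0} = X^0$ is flagged—is the bookkeeping in the forward step showing that the value of the extended coordinate is preserved by the shift. This rests on the two compatible facts that $\alpha$ fixes $0$ and that $\sum_{g \in \cJ^*} C_g$ is invariant under the reindexing $g \mapsto \alpha g$; once both are in place, the cyclic-code characterization of Theorem \ref{thm-newCharacterisationCyclicCodes} transfers to $\overline{\C}$ essentially verbatim.
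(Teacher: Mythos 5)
Your proof is correct and follows the same route the paper intends: the paper offers no detailed argument, merely noting that $X^{\alpha \cdot 0}=X^0$ and asserting that the claim follows from Theorem \ref{thm-newCharacterisationCyclicCodes}, which is exactly the observation you develop (multiplication by $\alpha$ fixes the extended coordinate, preserves $\sum_{g\in\cJ^*}C_g$, and hence commutes with the extension map). Your explicit handling of the converse by iterating the shift $n-1$ further times is a reasonable way to close the equivalence and introduces no gap.
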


If a cyclic code $\C$ is viewed as an ideal of $\cR_n=\gf(q)[x]/(x^n-1)$, it can be defined by its set of zeros 
or its defining set. When $\C$ and $\overline{\C}$ are put in the 
settings $\gf(q)[\cJ^*]$ and $\gf(q)[\cJ]$, respectively, they can be defined with some counterpart of the defining set. 
This can be done with the assistance of the following function $\phi_s$ from $\gf(q)[\cJ]$ to $\cJ$: 
\begin{eqnarray}
\phi_s\left(\sum_{g \in \cJ} C_g X^g\right)= \sum_{g \in \cJ} C_g g^s, 
\end{eqnarray}    
where $s \in \overline{\cN}:=\{i: 0 \leq i \leq n\}$ and by convention $0^0=1$ in $\cJ$. 

The following follows from Theorem \ref{thm-newCharacterisationExtendCyclicCodes} and the definition of 
$\phi_s$ directly. 

\begin{lemma}\label{lem-dec101}  
$\overline{C}(X)$ is the extended codeword of $C(X) \in \gf(q)[\cJ^*]$ if and only if $\phi_0(\overline{C}(X))=0$. 
In particular, if $\overline{\C}$ is the extended code of a primitive cyclic code $\C \subseteq \gf(q)[\cJ^*]$, 
then $\phi_0(\overline{C}(X))=0$ for all $\overline{C}(X) \in \overline{\C}$. 
\end{lemma}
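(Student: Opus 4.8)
The plan is to recognise that the lemma is, at bottom, a direct unwinding of two definitions — the defining condition for an extended codeword and the specialisation $s=0$ of the functional $\phi_s$ — so the entire content reduces to observing that these two conditions are literally the same equation. First I would write out $\phi_0$ explicitly on a general element $\overline{C}(X)=\sum_{g\in\cJ}C_g X^g$ of $\gf(q)[\cJ]$. By the definition of $\phi_s$ with $s=0$, and invoking the stated convention $0^0=1$ in $\cJ$, the summand indexed by $g=0$ contributes $C_0\cdot 0^0=C_0$ rather than vanishing, so that
\begin{eqnarray*}
\phi_0\!\left(\overline{C}(X)\right)=\sum_{g\in\cJ}C_g\,g^0=\sum_{g\in\cJ}C_g.
\end{eqnarray*}
Thus $\phi_0$ is nothing but the total coefficient sum of $\overline{C}(X)$ over the full index set $\cJ$, including the extended coordinate indexed by $0$.

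With this identity in hand, both implications are immediate. By the definition of extension, $\overline{C}(X)=\sum_{g\in\cJ}C_g X^g$ is the extended codeword of $C(X)=\sum_{g\in\cJ^*}C_g X^g$ precisely when the added coefficient $C_0$ is chosen so that $\sum_{g\in\cJ}C_g=0$. Comparing with the displayed identity, this condition reads exactly $\phi_0(\overline{C}(X))=0$, which settles the equivalence in both directions simultaneously. For the ``in particular'' clause I would appeal to Theorem \ref{thm-newCharacterisationExtendCyclicCodes}: every $\overline{C}(X)$ in the extended code $\overline{\C}$ of a primitive cyclic code $\C\subseteq\gf(q)[\cJ^*]$ satisfies $\sum_{g\in\cJ}C_g=0$ by that characterisation, and hence $\phi_0(\overline{C}(X))=0$ for every codeword of $\overline{\C}$.

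I do not expect any genuine obstacle here, as the argument is a two-line computation once the definitions are lined up. The only point requiring care — and the real reason the statement holds in the clean form given — is the convention $0^0=1$, which guarantees that $\phi_0$ picks up the coefficient $C_0$ of the extended coordinate rather than discarding it; without this convention the functional $\phi_0$ would compute $\sum_{g\in\cJ^*}C_g$ and fail to encode the extension condition. I would therefore make this convention explicit at the single step where the $g=0$ term is evaluated, and otherwise the proof is complete.

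\begin{proof}
By the definition of $\phi_s$ with $s=0$ and the convention $0^0=1$ in $\cJ$, for any $\overline{C}(X)=\sum_{g\in\cJ}C_g X^g \in \gf(q)[\cJ]$ we have
\begin{eqnarray*}
\phi_0\!\left(\overline{C}(X)\right)=\sum_{g\in\cJ}C_g\,g^0=\sum_{g\in\cJ}C_g,
\end{eqnarray*}
since the term with $g=0$ equals $C_0\cdot 0^0=C_0$. By definition, $\overline{C}(X)$ is the extended codeword of $C(X)=\sum_{g\in\cJ^*}C_g X^g$ if and only if $\sum_{g\in\cJ}C_g=0$, that is, if and only if $\phi_0(\overline{C}(X))=0$. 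Finally, if $\overline{\C}$ is the extended code of a primitive cyclic code $\C\subseteq\gf(q)[\cJ^*]$, then by Theorem \ref{thm-newCharacterisationExtendCyclicCodes} every $\overline{C}(X)\in\overline{\C}$ satisfies $\sum_{g\in\cJ}C_g=0$, and hence $\phi_0(\overline{C}(X))=0$.
\end{proof}
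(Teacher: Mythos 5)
Your proof is correct and is precisely the direct unwinding the paper has in mind: the paper states that the lemma ``follows from Theorem \ref{thm-newCharacterisationExtendCyclicCodes} and the definition of $\phi_s$ directly'' and omits the details, and your computation $\phi_0(\overline{C}(X))=\sum_{g\in\cJ}C_g$ (using the convention $0^0=1$) is exactly that argument. Nothing is missing.
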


\begin{lemma}\label{lem-dec102} 
Let $\C$ be a primitive cyclic code of length $n$ over $\gf(q)$. Let $T$ be the defining set of $\C$ with 
respect to $\alpha$, when it is viewed as an ideal of $\cR_n$. Let $s \in T$ and $1 \leq s \leq n-1$.  
We have 
then $\phi_s(\overline{C}(X))=0$ for all $\overline{C}(X) \in \overline{\C}$. 
\end{lemma}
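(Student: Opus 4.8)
The plan is to evaluate $\phi_s(\overline{C}(X))$ straight from its definition and to recognise the outcome as the value $c(\alpha^s)$ of the underlying cyclic codeword at $\alpha^s$; the defining-set hypothesis then forces this value to vanish. Let $\overline{C}(X) = \sum_{g \in \cJ} C_g X^g \in \overline{\C}$ be the extension of a codeword $C(X) = \sum_{g \in \cJ^*} C_g X^g$, which under the correspondence $\Upsilon$ of \eqref{eqn-Upsilon} records a cyclic codeword $c(x) = \sum_{i=0}^{n-1} c_i x^i$ of $\C$ through $C_{\alpha^i} = c_i$.

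First I would peel off the extended coordinate $g = 0$ from the defining sum of $\phi_s$, writing
$$
\phi_s(\overline{C}(X)) = \sum_{g \in \cJ} C_g g^s = C_0 \, 0^s + \sum_{g \in \cJ^*} C_g g^s ,
$$
and then rewrite the surviving sum over $\cJ^* = \gf(q^m)^*$ by setting $g = \alpha^i$ and substituting $C_{\alpha^i} = c_i$, which yields
$$
\sum_{g \in \cJ^*} C_g g^s = \sum_{i=0}^{n-1} c_i (\alpha^i)^s = \sum_{i=0}^{n-1} c_i (\alpha^s)^i = c(\alpha^s) .
$$
The hypothesis $s \geq 1$ is exactly what makes the first step legitimate: since $0^s = 0$ whenever $s \geq 1$, the term $C_0 \, 0^s$ vanishes regardless of the value of $C_0$, so neither the convention $0^0 = 1$ nor the extension constraint $\sum_{g \in \cJ} C_g = 0$ of Lemma \ref{lem-dec101} enters. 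This is the one place where the argument departs from the $s = 0$ situation of Lemma \ref{lem-dec101}, and it is the only step I expect to require care.

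It then remains to invoke the defining set. Because $s \in T$ with $1 \leq s \leq n-1$, the element $\alpha^s$ is a zero of the generator polynomial $g(x) = \prod_{t \in T}(x - \alpha^t)$, and since every codeword $c(x) \in \C$ is a multiple of $g(x)$ in $\cR_n$ while $(\alpha^s)^n = 1$, evaluation at $\alpha^s$ is well defined and gives $c(\alpha^s) = 0$. Combining this with the two displays above shows $\phi_s(\overline{C}(X)) = 0$ for every $\overline{C}(X) \in \overline{\C}$, as claimed. The whole argument is in effect a bookkeeping computation, so apart from the $0^s = 0$ observation I foresee no genuine obstacle.
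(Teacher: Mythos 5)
Your proof is correct and is exactly the standard argument: the paper itself states Lemma \ref{lem-dec102} without proof (deferring to \cite[Section 4.7]{HP03}), and the computation there is the same reduction of $\phi_s(\overline{C}(X))$ to $c(\alpha^s)$ after the $g=0$ term is killed by $0^s=0$ for $s\geq 1$, followed by the observation that $\alpha^s$ is a zero of the generator polynomial. Nothing further is needed.
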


\begin{lemma}\label{lem-dec103}  
Let $\C$ be a primitive cyclic code of length $n$ over $\gf(q)$. Let $T$ be the defining set of $\C$ with 
respect to $\alpha$, when it is viewed as an ideal of $\cR_n$. Then $0 \in T$ if and only if   
$\phi_n(\overline{C}(X))=0$ for all $\overline{C}(X) \in \overline{\C}$. 
\end{lemma}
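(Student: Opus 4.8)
The plan is to evaluate $\phi_n(\overline{C}(X))$ explicitly and to recognise it as the evaluation of the underlying cyclic codeword at the point $\alpha^0 = 1$, so that the membership $0 \in T$ turns into a vanishing condition. First I would unwind the definition of $\phi_n$ on an extended codeword $\overline{C}(X) = \sum_{g \in \cJ} C_g X^g$, splitting off the extended coordinate indexed by $0 \in \cJ$:
\begin{eqnarray*}
\phi_n(\overline{C}(X)) = \sum_{g \in \cJ} C_g g^n = C_0 \, 0^n + \sum_{g \in \cJ^*} C_g g^n.
\end{eqnarray*}
The two arithmetic facts that drive everything are that $n = q^m-1$ is exactly the order of the cyclic group $\cJ^* = \gf(q^m)^*$, so that $g^n = 1$ for every $g \in \cJ^*$, and that $0^n = 0$ since $n \geq 1$ (here the convention $0^0 = 1$ is irrelevant, because the exponent is $n$, not $0$). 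Substituting these two facts collapses the expression to $\phi_n(\overline{C}(X)) = \sum_{g \in \cJ^*} C_g$.

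Next I would identify this sum with a polynomial evaluation. Under the correspondence $\Upsilon$ of (\ref{eqn-Upsilon}), the codeword $C(X) = \sum_{g \in \cJ^*} C_g X^g$ comes from $c(x) = \sum_{i=0}^{n-1} c_i x^i$ with $c_i = C_{\alpha^i}$, and as $\alpha$ is primitive the exponents $\alpha^i$ run over all of $\cJ^*$, so that
\begin{eqnarray*}
\phi_n(\overline{C}(X)) = \sum_{g \in \cJ^*} C_g = \sum_{i=0}^{n-1} c_i = c(1) = c(\alpha^0).
\end{eqnarray*}
This is precisely the $s = n$ companion of Lemma~\ref{lem-dec102}: there the exponent $\alpha^s$ with $1 \leq s \leq n-1$ appears, whereas here the exponent $n$ folds back to $\alpha^n = \alpha^0 = 1$, and it is the $0^n = 0$ collapse of the extended coordinate that makes the identity come out as the evaluation at $1$ rather than picking up a contribution from $C_0$.

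Finally I would invoke the defining-set description of $\C$. Since $g(x) = \prod_{t \in T}(x - \alpha^t)$ and $c(x) \in \C$ exactly when $g(x) \mid c(x)$, the membership $0 \in T$ is equivalent to $\alpha^0 = 1$ being a zero of $g(x)$, i.e. to $c(\alpha^0) = 0$ holding for every codeword $c(x) \in \C$. Combining this with the identity $\phi_n(\overline{C}(X)) = c(\alpha^0)$ obtained above yields the stated equivalence, namely that $0 \in T$ if and only if $\phi_n(\overline{C}(X)) = 0$ for all $\overline{C}(X) \in \overline{\C}$. I do not anticipate a genuine obstacle, as the content is a short computation; the one point requiring care is the bookkeeping between the index $n \in \overline{\cN}$ used by $\phi_n$ and the exponent $0$ used inside the defining set $T$, together with keeping the $0^0$-versus-$0^n$ convention straight so that the extended coordinate drops out cleanly.
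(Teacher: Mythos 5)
Your proof is correct, and since the paper states Lemma~\ref{lem-dec103} without proof (deferring to \cite[Section 4.7]{HP03}), there is nothing to diverge from: your computation --- $0^n=0$ kills the extended coordinate, $g^n=1$ for $g\in\cJ^*$ collapses $\phi_n(\overline{C}(X))$ to $\sum_{g\in\cJ^*}C_g=c(1)=c(\alpha^0)$, and $0\in T$ is equivalent to $1$ being a root of $g(x)$, hence of every codeword --- is exactly the standard argument the paper is invoking. The points you flag (the $0^0$ convention being irrelevant here, and $n$ folding back to the exponent $0$) are indeed the only places where care is needed, and you handle them correctly.
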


Combining Lemmas \ref{lem-dec101}, \ref{lem-dec102}, \ref{lem-dec103} and the discussions above, 
we can define an extended cyclic code in terms of a defining set as follows. 

A code $\overline{\C}$ of length $q^m$ is an \emph{extended primitive cyclic code}\index{extended primitive cyclic code} 
with definition set $\overline{T}$ provided $\overline{T} \setminus \{n\} \subseteq \overline{\cN}$ is a union of 
$q$-cyclotomic cosets modulo $n=q^m-1$ with $0 \in \overline{T}$ and 
\begin{eqnarray}\label{eqn-defdefCodes}
\overline{\C}= \left\{\overline{C}(X) \in \gf(q)[\cJ]: \phi_s(\overline{C}(X))=0 \mbox{ for all } s \in \overline{T} \right\}. 
\end{eqnarray} 

The following remarks are helpful for fully understanding the characterisation of extended primitive 
cyclic codes: 
\begin{itemize} 
\item The condition that $\overline{T} \setminus \{n\} \subseteq \overline{\cN}$ is a union of 
$q$-cyclotomic cosets modulo $n=q^m-1$ is to ensure that the code $\C$ obtained by puncturing the 
first coordinate of $\overline{\C}$ and ordering the elements of $\cJ$ with $(0, \alpha^n, \alpha^1, \cdots, \alpha^{n-1})$ is a primitive cyclic code. 
\item The additional requirement $0 \in \overline{T}$ and (\ref{eqn-defdefCodes}) are to make sure 
that $\overline{\C}$ is the extended code of $\C$. 
\item If $n \in \overline{T}$, then $\C$ is an even-like code. In this case, the extension is trivial, 
i.e., the extended coordinate in every codeword of $\overline{\C}$ is always equal to $0$. 
If $n \not\in \overline{T}$, then $0 \not\in T$. Thus, the extension is nontrivial. 
\item If $\overline{\C}$ is the extended code of a primitive cyclic code $\C$, then 
\begin{eqnarray*}
\overline{T}=\left\{ 
\begin{array}{ll}
\{0\} \cup T & \mbox{ if } 0 \not\in T, \\
\{0, n\} \cup T & \mbox{ if } 0 \in T.  
\end{array}
\right.
\end{eqnarray*}
where $T$ and $\overline{T}$ are the defining sets of $\C$ and $\overline{\C}$, respectively.  
\item The following diagram illustrates the relations among the two codes and their definition sets: 
\begin{eqnarray*}
\begin{array}{lcr}
\C \subseteq \cR_n & \Longleftrightarrow \C \subseteq \gf(q)[\cJ^*] 
                      \Longrightarrow & \gf(q)[\cJ] \supseteq \overline{\C} \\
T \subseteq \cN     &      &    \overline{T} \subseteq \overline{\cN}                       
\end{array} 
\end{eqnarray*}
\end{itemize}   

Let $\sigma$ be a permutation on $\cJ$. This permutation acts on a code $\overline{\C} \subseteq \gf(q)[\cJ]$ as follows: 
\begin{eqnarray}
\sigma\left( \sum_{g \in \cJ} C_g X^g \right) = \sum_{g \in \cJ} C_g X^{\sigma(g)}.   
\end{eqnarray}

The \emph{affine permutation group}\index{affine permutation group}, denoted by $\AGL(1, q^m)$, is 
defined by 
\begin{eqnarray}\label{eqn-AGL(1qm)}
\AGL(1, q^m)=\{\sigma_{(a,b)}(y)=ay+b: a \in \cJ^*, \, b \in \cJ\}.  
\end{eqnarray} 
We have the following conclusions about $\AGL(1, q^m)$ whose proofs are straightforward: 
\begin{itemize} 
\item $\AGL(1, q^m)$ is a permutation group on $\cJ$ under the function composition. 
\item The group action of $\AGL(1, q^m)$ on $\gf(q^m)$ is doubly transitive, i.e., $2$-transitive. 
\item $\AGL(1, q^m)$ has order $(n+1)n=q^m(q^m-1)$. 
\item Obviously, the maps $\sigma_{(a,0)}$ are merely the cyclic shifts on the coordinates 
$(\alpha^n, \alpha^1, \cdots, \alpha^{n-1})$ each fixing the coordinate $0$. 
\end{itemize} 

An \emph{affine-invariant code}\index{affine-invariant} is an extended primitive cyclic code 
$\overline{\C}$ such that $\AGL(1, q^m) \subseteq \PAut(\overline{\C})$. For certain applications, 
it is important to know if a given extended primitive cyclic code $\overline{\C}$ is affine-invariant or not. 
This question can be answered by examining the defining set of the code. In order to do this, we introduce a partial 
ordering $\preceq$ on $\overline{\cN}$. Suppose that $q=p^t$ for some positive integer $t$. Then by 
definition $\overline{\cN}=\{0,1,2, \cdots, n\}$, where $n=q^m-1=p^{mt}-1$. The $p$-adic expansion of 
each $s \in \overline{\cN}$ is given by 
\begin{eqnarray*}
s=\sum_{i=0}^{mt-1} s_i p^i, \, \mbox{ where } 0 \leq s_i <p \mbox{ for all } 0 \leq i \leq mt-1.   
\end{eqnarray*}    
Let the $p$-adic expansion of $r \in \overline{\cN}$ be 
$$ 
r=\sum_{i=0}^{mt-1} r_i p^i. 
$$  
We say that $r \preceq s$ if $r_i \leq s_i$ for all $0 \leq i \leq mt-1$. By definition, we have 
$r\leq s$ if $r \preceq s$.

The following is a characterisation of affine-invariant codes due to Kasami, Lin and Peterson \cite{KLP68}. 

\begin{theorem}[Kasami-Lin-Peterson]\label{thm-KLPtheorem}
Let $\overline{\C}$ be an extended cyclic code of length $q^m$ over $\gf(q)$ with defining set $\overline{T}$. 
The code $\overline{\C}$ is affine-invariant if and only if whenever $s \in \overline{T}$ then $r \in \overline{T}$ for all 
$r \in \overline{\cN}$ with $r \preceq s$.    
\end{theorem}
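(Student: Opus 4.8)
The plan is to reduce affine-invariance to the behaviour of the translations and then to convert the action of a translation on the defining functionals $\phi_s$ into a binomial expansion whose surviving terms are governed \emph{exactly} by the partial order $\preceq$, via Lucas' theorem. First I would note that $\AGL(1,q^m)$ is generated by the multiplications $\sigma_{(a,0)}\colon y\mapsto ay$ (with $a\in\cJ^*$) together with the translations $\sigma_{(1,b)}\colon y\mapsto y+b$ (with $b\in\cJ$), since $\sigma_{(a,b)}=\sigma_{(1,b)}\circ\sigma_{(a,0)}$. Because $\overline{\C}$ is an extended primitive cyclic code, Theorem~\ref{thm-newCharacterisationExtendCyclicCodes} shows it is invariant under $\sigma_{(\alpha,0)}$, hence under all $\sigma_{(a,0)}$. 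Therefore $\AGL(1,q^m)\subseteq\PAut(\overline{\C})$ if and only if every translation $\sigma_{(1,b)}$ fixes $\overline{\C}$, and it suffices to characterise when this occurs.

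The technical heart is to compute the effect of a translation on $\phi_s$. Writing $\sigma_{(1,b)}\!\left(\sum_{g\in\cJ}C_gX^g\right)=\sum_{g\in\cJ}C_gX^{g+b}$ and reindexing gives
$$
\phi_s\!\left(\sigma_{(1,b)}(\overline{C})\right)=\sum_{g\in\cJ}C_g(g+b)^s .
$$
Expanding $(g+b)^s$ by the binomial theorem in characteristic $p$ and invoking Lucas' theorem, the integer coefficient $\binom{s}{r}$ reduces to a nonzero element of $\gf(q)$ precisely when $r\preceq s$, so that
$$
\phi_s\!\left(\sigma_{(1,b)}(\overline{C})\right)=\sum_{r\preceq s}\binom{s}{r}\,b^{\,s-r}\,\phi_r(\overline{C}),
$$
with every $\binom{s}{r}\neq 0$. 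This identity expresses the translated functional as a $\gf(q^m)[b]$-combination of the untranslated $\phi_r$ indexed exactly by the $r\preceq s$, which is what links the combinatorial order $\preceq$ to the algebra.

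For sufficiency, assume $\overline{T}$ is downward closed under $\preceq$. For $\overline{C}\in\overline{\C}$ and $s\in\overline{T}$, every $r\preceq s$ lies in $\overline{T}$, so $\phi_r(\overline{C})=0$ by the defining relation~(\ref{eqn-defdefCodes}); hence the displayed sum vanishes for all $b$, and each translation fixes $\overline{\C}$. For necessity, assume $\overline{\C}$ is affine-invariant and fix $s\in\overline{T}$ with $r\preceq s$. Affine-invariance forces $\phi_s(\sigma_{(1,b)}(\overline{C}))=0$ for every $\overline{C}\in\overline{\C}$ and every $b\in\gf(q^m)$; viewing the right-hand side as a polynomial in $b$ of degree at most $s\le n<q^m$ that vanishes at all $q^m$ points of $\gf(q^m)$, I would conclude it is the zero polynomial. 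Since the exponents $s-r$ are pairwise distinct and each $\binom{s}{r}\neq 0$, every coefficient $\phi_r(\overline{C})$ must vanish, so $\phi_r\equiv 0$ on $\overline{\C}$, whence $r\in\overline{T}$ (taking $\overline{T}$ to be the complete defining set).

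I expect the main obstacle to be the necessity direction: one must show that vanishing of the single functional $\phi_s$ after all translations propagates down to every $\phi_r$ with $r\preceq s$, which is exactly where the polynomial-vanishing argument and the nonvanishing of the Lucas coefficients $\binom{s}{r}$ are indispensable. A secondary point requiring care is the bookkeeping of the convention $0^0=1$ and the separate roles of $0$ and $n$ in $\overline{\cN}$, together with the Frobenius relation $\phi_{qs}(\overline{C})=\phi_s(\overline{C})^q$, which guarantees that $\{s:\phi_s\equiv 0\text{ on }\overline{\C}\}$ is a union of $q$-cyclotomic cosets and so may legitimately be identified with the defining set $\overline{T}$.
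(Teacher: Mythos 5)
Your proof is correct, and it is essentially the standard argument: the paper itself gives no proof of this theorem (it is quoted from Kasami--Lin--Peterson, with the reader referred to Section 4.7 of Huffman--Pless for details), and your reduction to translations, the binomial expansion of $\phi_s(\sigma_{(1,b)}(\overline{C}))$, the use of Lucas' theorem to identify the surviving terms with $r \preceq s$, and the polynomial-vanishing argument for necessity reproduce exactly that reference proof. Your closing caveats --- that $\overline{T}$ must be taken as the complete defining set and that the Frobenius relation makes the completion a union of $q$-cyclotomic cosets --- are precisely the points where care is needed, and you have handled them correctly.
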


Theorem \ref{thm-KLPtheorem} will be employed in the next section. It is a very useful tool to prove that 
an extended primitive cyclic code is affine-invariant.  

It is straightforward to prove that $\AGL(1, q^m)$ is doubly transitive on $\gf(q^m)$. 
The following theorem then follows from Theorem \ref{thm-designCodeAutm}.  

\begin{theorem}\label{thm-mainthmjan18}
Let $\C$ be an extended cyclic code of length $q^m$ over $\gf(q)$. If $\C$ affine-invariant, 
then the supports of the codewords of weight $k$ in $\C$ form a $2$-design, provided that 
$A_k \neq 0$.  
\end{theorem}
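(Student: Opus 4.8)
The plan is to deduce this statement directly from Theorem~\ref{thm-designCodeAutm} by exhibiting a $2$-transitive group sitting inside the automorphism group of $\C$. Since $\C$ is affine-invariant, by definition $\AGL(1, q^m) \subseteq \PAut(\C)$, and the inclusion chain $\PAut(\C) \subseteq \MAut(\C) \subseteq \GAut(\C)$ established earlier then gives $\AGL(1, q^m) \subseteq \GAut(\C)$. The coordinate set of $\C$ is the index set $\cJ = \gf(q^m)$ of cardinality $q^m$, which is exactly the length of $\C$, so the coordinate action of $\AGL(1, q^m)$ is precisely its natural action on $\gf(q^m)$.

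First I would invoke the double transitivity of $\AGL(1, q^m)$ on $\gf(q^m)$ recorded just before the statement: given the maps $\sigma_{(a,b)}(y) = ay + b$ with $a \in \cJ^*$ and $b \in \cJ$, one checks that any ordered pair of distinct points is carried to any other ordered pair of distinct points by a (in fact unique) $\sigma_{(a,b)}$, so the action is $2$-transitive. Next I would translate this into the paper's notion of $2$-transitivity for $\GAut(\C)$: because every element of $\AGL(1, q^m)$ is a pure coordinate permutation (its monomial part is the identity matrix and its field-automorphism part is trivial), the ``permutation part'' $P$ demanded in the definition of $t$-transitivity of $\GAut(\C)$ is realized by the $\AGL(1, q^m)$-element itself. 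Hence the $2$-transitivity of the subgroup $\AGL(1, q^m) \subseteq \GAut(\C)$ upgrades to $2$-transitivity of $\GAut(\C)$ in the sense required by Theorem~\ref{thm-designCodeAutm}.

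With $\GAut(\C)$ shown to be $2$-transitive, applying Theorem~\ref{thm-designCodeAutm} with $t = 2$ yields that the codewords of any weight $k \geq 2$ of $\C$ hold a $2$-design. To close the small gap to the stated hypothesis, I would note that $A_k \neq 0$ already forces $k \geq 2$: since $\C$ is an extended primitive cyclic code we have $0 \in \overline{T}$, so by Lemma~\ref{lem-dec101} every codeword satisfies $\phi_0(\overline{C}(X)) = \sum_{g \in \cJ} C_g = 0$, and a codeword with a single nonzero coordinate $C_g \neq 0$ would violate this sum condition; thus $\C$ has no codeword of weight $1$, and every nonzero weight is at least $2$. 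I do not anticipate a genuine obstacle, as the entire content is packaged in Theorem~\ref{thm-designCodeAutm} together with the double transitivity of $\AGL(1, q^m)$; the only point demanding care is the bookkeeping of the middle step, namely matching the abstract $2$-transitivity of $\AGL(1, q^m)$ as a permutation group with the permutation-part formulation that Theorem~\ref{thm-designCodeAutm} imposes on $\GAut(\C)$, which is immediate once one observes that $\AGL(1, q^m)$ lies in $\PAut(\C)$ and hence contributes pure permutations.
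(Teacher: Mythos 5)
Your proposal is correct and follows essentially the same route as the paper, which likewise deduces the theorem from the double transitivity of $\AGL(1,q^m)$ on $\gf(q^m)$ together with Theorem~\ref{thm-designCodeAutm}. The only addition is your closing observation that $A_k \neq 0$ forces $k \geq 2$ via the zero-sum condition on extended codewords, a small point of bookkeeping the paper leaves implicit.
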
 

The following is a list of known affine-invariant codes. 
\begin{itemize}
\item The classical generalised Reed-Muller codes of length $q^n$ \cite{AK92}. 
\item A family of newly generalised Reed-Muller codes of length $q^n$ \cite{DLX16}.   
\item The narrow-sense primitive BCH codes.  
\end{itemize}

If new affine-invariant codes are discovered, new $2$-designs may be obtained. In the next section, 
we will present a type of affine-invariant binary codes of length $2^m$, and will investigate their 
designs. Our major objective is to construct an infinite family of Steiner systems $S(2,4, 2^m)$.   

\section{A type of affine-invariant codes and their designs} 

In this section, we first present a class of affine-invariant binary codes of length $2^m$, 
and then study their designs. Our main purpose is to present an infinite family of Steiner 
systems $S(2,4,2^m)$ for every $m \equiv 2 \pmod{4} \geq 6$.  

Let $b$ denote the number of blocks in a $t$-$(v, k, \lambda)$ design. It is easily seen that  
\begin{eqnarray}\label{eqn-tdesignblocknum}
b= \lambda \frac{\binom{v}{t}}{\binom{k}{t} }. 
\end{eqnarray} 

We will need the following lemma 
in subsequent sections, which is a variant of the MacWilliam Identity \cite[p. 41]{vanLint}. 

\begin{theorem} \label{thm-MI}
Let $\C$ be a $[v, \kappa, d]$ code over $\gf(q)$ with weight enumerator $A(z)=\sum_{i=0}^v A_iz^i$ and let
$A^\perp(z)$ be the weight enumerator of $\C^\perp$. Then
$$A^\perp(z)=q^{-\kappa}\Big(1+(q-1)z\Big)^vA\Big(\frac {1-z} {1+(q-1)z}\Big).$$
\end{theorem}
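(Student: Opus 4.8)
The plan is to prove this via the character-theoretic (Fourier-analytic) method, which is the standard route to the MacWilliams identity. Fix a nontrivial additive character $\chi$ of $\gf(q)$; writing $q=p^s$ one may take $\chi(x)=\zeta_p^{\Tr(x)}$ where $\zeta_p=e^{2\pi i/p}$ and $\Tr$ is the absolute trace from $\gf(q)$ to $\gf(p)$. For vectors $\bc,\bx\in\gf(q)^v$ write $\bc\cdot\bx=\sum_{j=1}^v c_jx_j$ for the standard inner product, so that $\bx\in\C^\perp$ precisely when $\bc\cdot\bx=0$ for every $\bc\in\C$.

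First I would establish the key product formula: for any fixed $\bc\in\gf(q)^v$,
$$
\sum_{\bx\in\gf(q)^v}\chi(\bc\cdot\bx)\,z^{\wt(\bx)}
=\prod_{j=1}^v\left(\sum_{a\in\gf(q)}\chi(c_ja)\,z^{\wt(a)}\right),
$$
which is immediate because both sides factor coordinatewise and $\wt(\bx)=\sum_{j}\wt(x_j)$. Evaluating each inner factor is routine: if $c_j=0$ the factor equals $1+(q-1)z$, while if $c_j\neq0$ the substitution $a\mapsto c_ja$ together with $\sum_{a\in\gf(q)}\chi(a)=0$ (hence $\sum_{a\neq0}\chi(a)=-1$) gives $1-z$. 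Therefore the right-hand side equals $(1+(q-1)z)^{v-\wt(\bc)}(1-z)^{\wt(\bc)}$.

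Next I would sum this identity over all $\bc\in\C$ and evaluate both sides. Summing the explicit product over $\bc\in\C$ and factoring out $(1+(q-1)z)^v$ yields
$$
(1+(q-1)z)^v\sum_{\bc\in\C}\left(\frac{1-z}{1+(q-1)z}\right)^{\wt(\bc)}
=(1+(q-1)z)^v\,A\!\left(\frac{1-z}{1+(q-1)z}\right).
$$
On the other side, interchanging the order of summation gives $\sum_{\bx\in\gf(q)^v}z^{\wt(\bx)}\sum_{\bc\in\C}\chi(\bc\cdot\bx)$. The main step is the orthogonality relation $\sum_{\bc\in\C}\chi(\bc\cdot\bx)=q^\kappa$ when $\bx\in\C^\perp$ and $0$ otherwise: the map $\bc\mapsto\chi(\bc\cdot\bx)$ is an additive character of the finite abelian group $(\C,+)$, a nontrivial character of such a group sums to zero, and this character is trivial exactly when $\bx\in\C^\perp$. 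This collapses the inner sum and leaves $q^\kappa\sum_{\bx\in\C^\perp}z^{\wt(\bx)}=q^\kappa A^\perp(z)$.

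Equating the two evaluations and dividing by $q^\kappa$ produces the claimed identity. I expect the only genuinely delicate point to be the orthogonality relation in the second evaluation --- specifically verifying that $\bc\mapsto\chi(\bc\cdot\bx)$ is trivial on $\C$ if and only if $\bx\in\C^\perp$, which uses that $\chi$ is a nontrivial character of $\gf(q)$ so that whenever $\bx\notin\C^\perp$ there is some $\bc\in\C$ with $\chi(\bc\cdot\bx)\neq1$. Everything else reduces to the coordinatewise factorization and the elementary character sum $\sum_{a\in\gf(q)}\chi(a)=0$ over $\gf(q)$.
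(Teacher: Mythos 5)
Your proof is correct. The paper itself gives no proof of this theorem --- it simply cites it as the MacWilliams identity from van Lint \cite[p.\ 41]{vanLint} --- and your character-theoretic argument is precisely the standard proof found in that reference: the coordinatewise factorization giving $(1+(q-1)z)^{v-\wt(\bc)}(1-z)^{\wt(\bc)}$, followed by the orthogonality relation $\sum_{\bc\in\C}\chi(\bc\cdot\bx)=q^{\kappa}$ or $0$ according as $\bx\in\C^\perp$ or not. You correctly flag the one point needing care (that $\bc\mapsto\chi(\bc\cdot\bx)$ is nontrivial on $\C$ when $\bx\notin\C^\perp$); for completeness one notes that $\{\bc\cdot\bx:\bc\in\C\}$ is then a nonzero $\gf(q)$-subspace of $\gf(q)$, hence all of $\gf(q)$, so the nontriviality of $\chi$ applies.
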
 

Shortly, we will need also the following theorem. 

\begin{theorem}\label{thm-allcodes}
Let $\C$ be an $[n, k, d]$ binary linear code, and let $\C^\perp$ denote the dual of $\C$. Denote by $\overline{\C^\perp}$ 
the extended code of $\C^\perp$, and let  $\overline{\C^\perp}^\perp$ denote the dual of  $\overline{\C^\perp}$. Then we 
have the following. 
\begin{enumerate}
\item $\C^\perp$ has parameters $[n, n-k, d^\perp]$, where $d^\perp$ denotes the minimum distance of $\C^\perp$. 
\item $\overline{\C^\perp}$ has parameters $[n+1, n-k, \overline{d^\perp}]$, where $\overline{d^\perp}$ denotes the minimum 
distance of $\overline{\C^\perp}$, and is given by 
\begin{eqnarray*}
\overline{d^\perp} = \left\{ 
\begin{array}{ll}
d^\perp  & \mbox{ if $d^\perp$ is even,}  \\
d^\perp  + 1 & \mbox{ if $d^\perp$ is odd.} 
\end{array} 
\right. 
\end{eqnarray*}
\item $\overline{\C^\perp}^\perp$ has parameters $[n+1, k+1, \overline{d^\perp}^\perp]$, where $\overline{d^\perp}^\perp$ denotes the minimum 
distance of $\overline{\C^\perp}^\perp$. 
Furthermore, $\overline{\C^\perp}^\perp$ has only even-weight codewords, and all the nonzero weights in $\overline{\C^\perp}^\perp$ are 
the following: 
\begin{eqnarray*}
w_1, \, w_2,\, \cdots,\, w_t;\, n+1-w_1, \, n+1-w_2, \, \cdots, \, n+1-w_t; \, n+1,  
\end{eqnarray*} 
where $w_1,\,  w_2,\, \cdots,\, w_t$ denote all the nonzero weights of $\C$. 
\end{enumerate} 
\end{theorem}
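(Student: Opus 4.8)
My plan is to realise the three codes concretely inside $\F_2^{n}$ and $\F_2^{n+1}$ and to read off all the weight information from an explicit generating set, rather than pushing weight enumerators through the MacWilliams identity of Theorem~\ref{thm-MI} (which would also work but obscures the clean weight list). Parts~(1) and~(2) are the standard facts about duals and parity extensions. For~(1), $\dim\C^\perp=n-k$ and $d^\perp$ is its minimum distance by definition. For~(2), appending an overall parity check keeps the dimension at $n-k$ and raises the length to $n+1$; a codeword of $\C^\perp$ of even weight extends with a $0$ in the new coordinate and one of odd weight extends with a $1$. Hence a minimum-weight codeword keeps weight $d^\perp$ when $d^\perp$ is even and acquires weight $d^\perp+1$ when $d^\perp$ is odd, and since every heavier codeword extends to weight at least this value, $\overline{d^\perp}$ is as claimed.

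The heart of the matter is~(3), and the first step is to identify $\overline{\C^\perp}^\perp$ explicitly. Write a vector of length $n+1$ as $y=(y_0,y_1,\dots,y_n)$ with $y_0$ the extended coordinate. A codeword of $\overline{\C^\perp}$ has the form $(\textstyle\sum_{i} z_i,\,z_1,\dots,z_n)$ with $z\in\C^\perp$, so the orthogonality condition $y\cdot v=0$ for all such $v$ becomes $\sum_i (y_0+y_i)z_i=0$ for all $z\in\C^\perp$, i.e.\ $(y_0+y_1,\dots,y_0+y_n)\in(\C^\perp)^\perp=\C$. Parametrising the solutions by $y_0\in\F_2$ and $c=(y_0+y_i)_i\in\C$ yields
\[
\overline{\C^\perp}^\perp=\big\langle\,\{(0,c):c\in\C\},\ \mathbf 1\,\big\rangle,
\]
where $\mathbf 1$ denotes the all-ones vector of length $n+1$. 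Everything in~(3) now falls out: the generator $\mathbf 1$ is independent of the subspace $\{(0,c):c\in\C\}$ because it is the only one nonzero in coordinate $0$, so the dimension is $k+1$; and every codeword is either $(0,c)$, of weight $\wt(c)$, or $\mathbf 1+(0,c)=(1,\mathbf 1_n+c)$, of weight $n+1-\wt(c)$. Letting $c$ run through $\C$ and listing the nonzero values produces exactly $w_1,\dots,w_t$ (from $(0,c)$ with $c\neq 0$), the complements $n+1-w_1,\dots,n+1-w_t$ (from $\mathbf 1+(0,c)$ with $c\neq 0$), and $n+1$ (from $c=0$, i.e.\ $\mathbf 1$ itself).

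The step I expect to be the real obstacle is the even-weight assertion, since it is the only claim that is not a formal consequence of the generating set above: the weights $\wt(c)=w_i$ are even only when $\C$ is an even code. The clean way to handle it is the standard criterion that a binary code has only even-weight words if and only if $\mathbf 1$ lies in its dual; here this means $\mathbf 1\in\overline{\C^\perp}$. I would verify this by tracing the all-ones vector back through the extension: $\mathbf 1=(1,\mathbf 1_n)$ lies in $\overline{\C^\perp}$ exactly when $\mathbf 1_n\in\C^\perp$ and the parity coordinate of $\mathbf 1_n$ equals $1$, that is, when $\C$ is even-like and $n$ is odd. Both hold in the situation of this paper, where $n=2^m-1$ is odd and the codes to which the result is applied are even-like, so $\mathbf 1\in\overline{\C^\perp}$ and $\overline{\C^\perp}^\perp$ is indeed even; this is the one place where the argument uses more than linear algebra and where I would take care to record the hypotheses actually needed.
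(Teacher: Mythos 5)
Your argument is correct and follows essentially the paper's own route: you reconstruct by hand the generator matrix $\left[\begin{smallmatrix}\bone & 1\\ G & \bzero\end{smallmatrix}\right]$ of $\overline{\C^\perp}^\perp$ that the paper simply cites from Huffman--Pless, and then both proofs read the dimension $k+1$ and the weight list $w_i$, $n+1-w_i$, $n+1$ directly off the two cosets $\{(0,c):c\in\C\}$ and $\bone+\{(0,c):c\in\C\}$. Parts (1) and (2) are handled identically (the paper declares them straightforward).

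The one place where you depart from the paper is exactly the place where you should: the even-weight assertion in part (3). You are right that it is not a formal consequence of the generating set, and that it holds precisely when $\bone\in\overline{\C^\perp}$, i.e.\ when $n$ is odd and $\C$ has only even weights. The theorem as stated omits these hypotheses and is false without them: take $\C$ the $[3,1,3]$ repetition code, so that $\overline{\C^\perp}^\perp$ has nonzero weights $3$, $1$, and $4$. The paper's proof does not close this gap either; it only remarks that $\overline{\C^\perp}$ --- not its dual --- has only even weights, which is true of every extended binary code but irrelevant to the claim about $\overline{\C^\perp}^\perp$. Your repair, checking that the missing hypotheses hold wherever the theorem is invoked ($n=2^m-1$ is odd, and $\C=\C_e^\perp$ has only even weights by Tables 1--3), is the right one. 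In short: same method, but your explicit flagging of the extra hypotheses needed for the even-weight claim is a genuine correction to the paper's statement and proof rather than a stylistic difference.
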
 

\begin{proof}
The conclusions of the first two parts are straightforward. We prove only the conclusions of the third part below. 

Since $\overline{\C^\perp}$ has length $n+1$ and dimension $n-k$,  the dimension of $\overline{\C^\perp}^\perp$ 
is $k+1$. By assumption, all codes under consideration are binary. By definition, $\overline{\C^\perp}$ has only 
even-weight codewords. Recall that $\overline{\C^\perp}$ is the extended code of $\C^\perp$. It is known that  
the generator matrix of $\overline{\C^\perp}^\perp$ is given by  (\cite[p. 15]{HP03})  
\begin{eqnarray*}
\left[ 
\begin{array}{cc}
\bone & 1 \\
G & \bzero
\end{array}
\right]. 
\end{eqnarray*}
where $\bone=(1 1 1 \cdots 1)$ is the all-one vector of length $n$,  $\bzero=(000 \cdots 0)^T$, which is a column 
vector of length $n$, and $G$ is the generator matrix of $\C$. Notice again that  $\overline{\C^\perp}^\perp$ is binary, the 
desired conclusions on the weights in $\overline{\C^\perp}^\perp$ follow from the relation between the two generator 
matrices of the two codes  $\overline{\C^\perp}^\perp$ and $\C$. 
\end{proof}

\subsection{The type of affine-invariant codes and their designs}

Starting from now on, we deal with only binary codes and their support designs, 
and we define $n=2^m-1$ and $\bar{n}=2^m$. 

Let $m \geq 2$ be a positive integer. Define $\overline{m}=\lfloor m/2 \rfloor$ and 
$M=\{1,2, \cdots, \overline{m}\}$. Let $E$ be any nonempty subset of $M$. Let 
\begin{equation}\label{eqn-generatorpolyCe}
g_E(x)=\m_{\alpha}(x) \lcm\{\m_{\alpha^{1+2^e}}(x): e \in E\}, 
\end{equation} 
where $\alpha$ is a generator of $\gf(2^m)^*$, $\m_{\alpha^i}(x)$ denotes the minimal polynomial 
of $\alpha^i$ over $\gf(2)$, and $\lcm$ denotes the least common multiple of a set of polynomials. 
Note that every $e \in E$ satisfies $e \leq \overline{m}$, and the $2$-cyclotomic cosets $C_1$ and $C_e$ 
are disjoint. Consequently, the two irreducible polynomials $\m_{\alpha}(x)$ and $\m_{\alpha^{1+2^e}}(x)$ 
are relatively prime. It then follows that $g_E(x)$ divides $x^n-1$. Let $\C_E$ denote the binary 
cyclic code of length $n$ with generator polynomial $g_{E}(x)$. 

\begin{theorem}
Let $m \geq 3$. Then the generator polynomial of $\C_E$ is given by  
$$ 
g_E(x)=\m_{\alpha}(x) \prod_{e \in E} \m_{\alpha^{1+2^e}}(x). 
$$ 
Furthermore, $\C_E$ has dimension  
\begin{eqnarray}
\dim(\C_E)=
\left\{ 
\begin{array}{ll} 
2^m-1- (2|E|+1)m/2 & \mbox{ if $m$ is even and $m/2 \in E$.} \\ 
2^m-1-(|E|+1)m  & \mbox{ otherwise,} 
\end{array}
\right. 
\end{eqnarray}
\end{theorem}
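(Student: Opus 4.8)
\section*{Proof proposal}

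The plan is to reduce both claims to counting the sizes of the relevant $2$-cyclotomic cosets modulo $n=2^m-1$, using $\dim(\C_E)=n-\deg g_E(x)$ together with the fact that $\deg\m_{\alpha^s}(x)=|C_s|$, where $C_s=\{s\cdot 2^i \bmod n: i\ge 0\}$ is the $2$-cyclotomic coset of $s$. The device I would rely on is that multiplication by $2$ modulo $2^m-1$ is exactly a cyclic shift of the $m$-bit binary string of a residue, so $|C_s|$ is the least period of that string under rotation. For $1\le e\le\overline{m}=\lfloor m/2\rfloor$ and $m\ge 3$ one has $1+2^e<2^m-1$, so $1+2^e$ is already reduced and its binary string is supported on the two distinct positions $0$ and $e$; thus $|C_{1+2^e}|$ is the orbit size of the two-element set $\{0,e\}$ under the rotation group $\Z/m\Z$.

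First I would pin down the coset sizes. A rotation by $j$ fixes $\{0,e\}$ iff $\{j,\,e+j\}\equiv\{0,e\}\pmod m$, which forces either $j\equiv 0\pmod m$, or else $j\equiv e$ and $2e\equiv 0\pmod m$. Since $1\le e\le\lfloor m/2\rfloor$, the second possibility arises only when $m$ is even and $e=m/2$; then rotation by $m/2$ fixes $\{0,m/2\}$ and $|C_{1+2^{m/2}}|=m/2$, while in every other case the stabilizer is trivial and $|C_{1+2^e}|=m$. The same reasoning gives $|C_1|=m$ at once, as $C_1=\{1,2,4,\dots,2^{m-1}\}$.

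Next I would verify that the $\lcm$ in \eqref{eqn-generatorpolyCe} collapses to the stated product, i.e.\ that $C_1$ and the cosets $C_{1+2^e}$ ($e\in E$) are pairwise distinct. Every $1+2^e$ with $e\ge 1$ is odd and greater than $1$, hence is not a power of $2$, so $1+2^e\notin C_1$ and $\m_{\alpha^{1+2^e}}\neq\m_\alpha$. For distinct $e,e'\in\{1,\dots,\overline{m}\}$, equality $C_{1+2^e}=C_{1+2^{e'}}$ would make $\{0,e\}$ and $\{0,e'\}$ rotations of one another, forcing $e=e'$ or $e+e'\equiv 0\pmod m$; but $3\le e+e'\le 2\lfloor m/2\rfloor\le m$ with $e\neq e'$ rules out the latter. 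Hence the minimal polynomials are pairwise coprime and $g_E(x)=\m_\alpha(x)\prod_{e\in E}\m_{\alpha^{1+2^e}}(x)$.

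Finally, summing degrees yields $\dim(\C_E)=(2^m-1)-m-\sum_{e\in E}|C_{1+2^e}|$. If $m$ is even and $m/2\in E$, exactly one summand is $m/2$ and the other $|E|-1$ are $m$, giving $2^m-1-(2|E|+1)m/2$; otherwise all $|E|$ summands are $m$, giving $2^m-1-(|E|+1)m$. The one delicate point throughout is the exceptional coset at $e=m/2$ for even $m$: both the drop of its size to $m/2$ and the check that it creates no coincidence among the remaining cosets turn on the single congruence $2e\equiv 0\pmod m$, and the care required is simply in not conflating its two roles.
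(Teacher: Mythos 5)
Your proof is correct and follows the same overall reduction as the paper's: $\dim(\C_E)=n-\deg g_E(x)$ together with $\deg \m_{\alpha^s}(x)=|C_s|$, so everything comes down to the sizes and pairwise distinctness of the cosets $C_1$ and $C_{1+2^e}$, $e\in E$. The difference is that the paper imports the two key facts (that $1+2^e$ is a coset leader, and that $|C_{1+2^e}|=m$ except when $m$ is even and $e=m/2$, where it drops to $m/2$) from the reference \cite{YH96}, whereas you prove them directly by viewing multiplication by $2$ modulo $2^m-1$ as rotation of the $m$-bit string and computing the stabilizer of the two-element support $\{0,e\}$. Your orbit--stabilizer computation is right: the only candidate nontrivial rotation fixing $\{0,e\}$ is $j\equiv e$ with $2e\equiv 0\pmod m$, which under $1\le e\le\lfloor m/2\rfloor$ occurs exactly for even $m$ and $e=m/2$. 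You are also more explicit than the paper on one point it only gets implicitly from the coset-leader property, namely the pairwise distinctness of $C_{1+2^e}$ and $C_{1+2^{e'}}$ for $e\ne e'$, which is what lets the $\lcm$ in the definition of $g_E$ collapse to a product. In short, your argument is self-contained where the paper's is essentially a citation; both yield the same degree count and hence the same dimension formula.
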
   

\begin{proof}
The following list of properties was proved in \cite{YH96}: 
\begin{itemize}
\item For each $e \in E$, $1+2^e$ is a coset leader. 
\item For each $e \in E$, $|C_{e}|=m$, except that $m$ is even and $e=m/2$, in which case 
      $|C_{m/2}|=m/2$.  
\end{itemize} 
Note that $1$ is the coset leader of the $2$-cyclotomic coset $C_1$ with $|C_1|=m$. Then the  
desired conclusions on the generator polynomial and dimension follow. 
\end{proof}

\begin{theorem}\label{thm-jan182}
The extended code $\overline{C_E}$ is affine invariant.  
\end{theorem}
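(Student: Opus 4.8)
The plan is to apply the Kasami-Lin-Peterson criterion (Theorem~\ref{thm-KLPtheorem}) directly to the defining set of $\overline{\C_E}$. First I would assemble the defining set $\overline{T}$ explicitly. From the generator polynomial $g_E(x) = \m_\alpha(x) \prod_{e \in E} \m_{\alpha^{1+2^e}}(x)$, the defining set $T$ of the cyclic code $\C_E$ (as an ideal of $\cR_n$) is the union of the $2$-cyclotomic cosets $C_1$ and $C_{1+2^e}$ for $e \in E$. Since $1 \in T$ means $0 \notin T$, the code $\C_E$ is odd-like and the extension is nontrivial, so by the explicit description of $\overline{T}$ given in the excerpt, we have $\overline{T} = \{0\} \cup T = \{0\} \cup C_1 \cup \bigcup_{e \in E} C_{1+2^e}$.

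The heart of the proof is to verify the divisibility-closure condition: whenever $s \in \overline{T}$ and $r \preceq s$ (where $\preceq$ is the partial order on binary $p$-adic expansions, here with $p=2$), then $r \in \overline{T}$. I would organize this by the Hamming weight of the binary expansion of the elements of $\overline{T}$. The element $0$ poses no constraint since nothing is strictly below it. The coset $C_1 = \{2^i \bmod n : 0 \le i \le m-1\}$ consists of all integers whose binary expansion has a single $1$; the only $r \preceq 2^i$ are $r = 0$ and $r = 2^i$ itself, both already in $\overline{T}$. The real work is with $C_{1+2^e}$: each $1+2^e$ has binary weight two, and its cyclotomic coset consists of the cyclic shifts $2^i(1+2^e) \bmod n$, all of binary weight two. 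For any such $s$ of weight two, the elements $r \preceq s$ are $0$, the two individual power-of-two "sub-masks", and $s$ itself. The two sub-masks lie in $C_1$, and $0$ and $s$ are already in $\overline{T}$, so closure holds.

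The step I expect to be the main obstacle is handling the boundary behaviour of the $p$-adic expansions carefully, since $\preceq$ is defined on $\overline{\cN} = \{0,1,\dots,n\}$ using fixed-length $m$-bit expansions modulo $n = 2^m - 1$, and one must be sure that reducing a shift $2^i(1+2^e)$ modulo $2^m-1$ keeps it a genuine weight-two binary word rather than wrapping into something of different weight. This is exactly where the hypothesis $e \le \overline{m} = \lfloor m/2 \rfloor$ and the guarantee from the previous theorem that each $1+2^e$ is a coset leader become relevant: they ensure the cosets $C_{1+2^e}$ are well-understood weight-two cosets and do not collide or degenerate. Once the weight-two structure is confirmed, the closure condition reduces to the observation that every proper sub-mask of a weight-two word has weight at most one and hence lies in $\{0\} \cup C_1 \subseteq \overline{T}$. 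I would then conclude, by Theorem~\ref{thm-KLPtheorem}, that $\overline{\C_E}$ is affine-invariant.
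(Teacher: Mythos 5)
Your proposal is correct and follows essentially the same route as the paper's own proof: both invoke the Kasami--Lin--Peterson criterion, identify $\overline{T}=\{0\}\cup C_1\cup\bigcup_{e\in E}C_{1+2^e}$, and verify the closure condition under $\preceq$ by noting that elements of $\overline{T}$ have binary weight $0$, $1$, or $2$, and that every sub-mask of a weight-two word lies in $\{0\}\cup C_1\subseteq\overline{T}$. (The weight-preservation under cyclic shifts that you flag as the delicate point holds automatically, since multiplication by $2$ modulo $2^m-1$ permutes the $m$ binary digits; it does not actually require the coset-leader hypothesis.)
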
 

\begin{proof}
We prove the desired conclusion with the help Theorem \ref{thm-KLPtheorem} and follow the notation and 
symbols employed in the proof of Theorem \ref{thm-KLPtheorem}. 
Let $\overline{\cN}=\{0,1,2, \cdots, n\}$, where $n=2^m-1$. 
The defining set $T$ of the 
cyclic code $\C_E$ is $T=C_1 \cup (\cup_{e \in E} C_e)$. Since $0 \not\in T$, the defining 
set $\overline{T}$ of $\overline{C_E}$ is given by 
$$ 
\overline{T}=C_1 \cup (\cup_{e \in E} C_e) \cup \{0\}. 
$$
Let $s \in \overline{T}$ and $r \in \overline{\cN}$. Assume 
that $r \preceq s$. We need prove that $r \in \overline{T}$ by Theorem \ref{thm-KLPtheorem}. 

If $r=0$, then obviously $r \in \overline{T}$. Consider now the case $r >0$. In this case $s \geq r \geq 1$. 
If $s \in C_1$, then the Hamming weight $\wt(s)=1.$ As $r \preceq s$, $r=s$. Consequently, $r \in 
C_1 \subset \overline{T}$. 
If $s \in C_e$, then the Hamming weight $\wt(s)=2.$ As $r \preceq s$, either $\wt(r)=1$ or $r=s$. 
In bother cases,  $r \in \overline{T}$. The desired conclusion then follows from Theorem \ref{thm-KLPtheorem}. 
\end{proof} 

Combining Theorems \ref{thm-jan182}, \ref{thm-mainthmjan18} and \ref{thm-dualdesignbinary}, we 
arrive at the following conclusions. 

\begin{theorem}\label{thm-mymainjan18}
Let $m \geq 3$ be an integer. The supports of the codewords of every weight $k$ in $\overline{\C_{E}}$ 
(respectively, $\overline{\C_{E}}^\perp$) form a $2$-design, provided that $\overline{A}_k \neq 0$ 
(respectively, $\overline{A}_k^\perp \neq 0$).    
\end{theorem}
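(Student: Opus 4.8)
The statement follows by assembling the three cited results, and the only hypothesis that needs genuine checking is a lower bound on the minimum distance of $\overline{\C_E}$; everything else is bookkeeping. The plan is therefore to dispatch the two assertions separately, handling the dual claim via Theorem \ref{thm-dualdesignbinary} and isolating the one step that carries content.

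The first assertion is immediate. By Theorem \ref{thm-jan182} the code $\overline{\C_E}$ is affine-invariant, so $\AGL(1, 2^m) \subseteq \PAut(\overline{\C_E})$. Since $\AGL(1, 2^m)$ acts doubly transitively on $\cJ$, the hypothesis of Theorem \ref{thm-mainthmjan18} is satisfied, and that theorem gives at once that the supports of the codewords of any weight $k$ with $\overline{A}_k \neq 0$ form a $2$-design.

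For the dual code I would invoke Theorem \ref{thm-dualdesignbinary} with $\C = \overline{\C_E}$ and $t = 2$. The code is binary and has dimension greater than $1$ for $m \geq 3$, and the requirement that for every weight $w > 0$ the weight-$w$ supports form a $t$-design is exactly the first assertion just established (and it is vacuous whenever $\overline{A}_w = 0$). The one remaining hypothesis, $t < d$, asks that $2 < d(\overline{\C_E})$, and this is the step I expect to be the crux.

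To verify it, I would bound the minimum distance of $\C_E$ from below using that $C_1$ lies in its defining set, i.e. that $\alpha$ is a zero of every codeword. A weight-one word $x^i$ evaluates to $\alpha^i \neq 0$, and a weight-two word $x^i + x^j$ with $i \neq j$ evaluates to $\alpha^i + \alpha^j \neq 0$, since the two terms are distinct and we are in characteristic $2$; hence $\C_E$ has no nonzero codeword of weight $1$ or $2$, so $d(\C_E) \geq 3$. Extending by an overall parity check does not decrease the minimum distance, so $d(\overline{\C_E}) \geq d(\C_E) \geq 3 > 2$. With $t = 2 < d(\overline{\C_E})$ confirmed, Theorem \ref{thm-dualdesignbinary} yields that the supports of the codewords of each nonzero weight in $\overline{\C_E}^\perp$ form a $2$-design, which is the second assertion. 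The only place a real argument is needed is this minimum-distance bound; the rest is a direct chain of citations.
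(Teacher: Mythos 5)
Your proposal is correct and follows exactly the paper's route: Theorem \ref{thm-jan182} plus Theorem \ref{thm-mainthmjan18} for $\overline{\C_E}$, then Theorem \ref{thm-dualdesignbinary} for the dual. The one thing you add is the explicit verification of the hypothesis $2 < d(\overline{\C_E})$ via the BCH-type argument that $\alpha$ being a zero rules out weight-$1$ and weight-$2$ codewords; the paper leaves this check implicit, and your argument for it is sound.
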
 

Theorem \ref{thm-mymainjan18} includes a class of $2^{\lfloor m/2 \rfloor}-1$ affine invariant 
binary codes $\overline{\C_E}$ and their duals. They give exponentially many infinite families 
of $2$-$(2^m, k, \lambda)$ designs. To determine the parameters $(2^m, k, \lambda)$ of the $2$-designs, 
we need to settle the weight distributions of these codes. The weight distributions of these codes 
are related to quadratic form, bilinear forms, and alternating bilinear forms, and are open in 
general. Note that the code $\C_{E}$ may be a BCH code in some cases, but is not a BCH code in most cases. 

\subsection{Designs from the codes $\C_{\{1+2^e\}}$ and their relatives}\label{sec-mainsect} 

As made clear earlier, our main objective is to construct an infinite family of Steiner systems 
$S(2, 4, 2^m)$. To this end, we consider the code $\C_E$ and its extended code $\overline{\C_E}$ in this 
section for the special case $E=\{1+2^e\}$, where $1 \leq e \leq \overline{m}=\lfloor m/2 \rfloor$. 
For simplicity, we denote this code by $\C_e$ in this section.   

\begin{table}[ht]
\center 
\caption{Weight distribution I}\label{tab-aaCG1} 
{
\begin{tabular}{ll}
\hline
Weight $w$    & No. of codewords $A_w$  \\ \hline
$0$                                                        & $1$ \\ 
$2^{m-1}-2^{m-1-h}$           & $(2^m-1)(2^h+1)2^{h-1}$ \\ 
$2^{m-1}$                             & $(2^m-1)(2^m-2^{2h}+1)$ \\ 
$2^{m-1}+2^{m-1-h}$           & $(2^m-1)(2^h-1)2^{h-1}$ \\ \hline
\end{tabular}
}
\end{table}

\begin{table}[ht]
\center 
\caption{Weight distribution II}\label{tab-aaCG2}
{
\begin{tabular}{ll}
\hline
Weight $w$    & No. of codewords $A_w$  \\ \hline
$0$                                                        & $1$ \\ 
$2^{m-1}-2^{(m-2)/2}$          & $(2^{m/2}-1)(2^{m-1}+2^{(m-2)/2})$ \\ 
$2^{m-1}$                               & $2^m-1$ \\ 
$2^{m-1}+2^{(m-2)/2}$          & $(2^{m/2}-1)(2^{m-1}-2^{(m-2)/2})$ \\ \hline
\end{tabular}
}
\end{table}

\begin{table}[ht]
\center 
\caption{Weight distribution III}\label{tab-aaMine}
{
\begin{tabular}{ll}
\hline
Weight $w$    & No. of codewords $A_w$  \\ \hline
$0$                                                        & $1$ \\ 
$2^{m-1}-2^{(m+\ell-2)/2}$          & $2^{(m-\ell-2)/2}(2^{(m-\ell)/2}+1)(2^m-1)/(2^{\ell/2}+1)$ \\ 
$2^{m-1}-2^{(m-2)/2}$          & $2^{(m+\ell-2)/2}(2^{m/2}+1)(2^m-1)/(2^{\ell/2}+1)$ \\ 
$2^{m-1}$                               & $( (2^{\ell/2} -1)2^{m-\ell} +1 )(2^m - 1)$ \\ 
$2^{m-1}+2^{(m-2)/2}$          & $2^{(m+\ell-2)/2}(2^{m/2}-1)(2^m-1)/(2^{\ell/2}+1)$ \\ 
$2^{m-1}+2^{(m+\ell-2)/2}$          & $2^{(m-\ell-2)/2}(2^{(m-\ell)/2}-1)(2^m-1)/(2^{\ell/2}+1)$ \\ \hline 
\end{tabular}
}
\end{table}

The following theorem provides information on the parameters of $\C_e$ and its dual 
$\C_{e}^\perp$ \cite{Kasa69}. 

\begin{theorem}\label{thm-calder} 
Let $m \ge 4$ and $1 \leq e \le m/2$. Then $\C_{e}^\perp$ is a three-weight
code if and only if either $m/\gcd(m,e)$ is odd or $m$ is even and $e=m/2$, where
$n=2^m-1$.

When $m/\gcd(m,e)$ is odd, define $h=(m-\gcd(m,e))/2$. Then the dimension of
$\C_{e}^\perp$ is $2m$, and the weight
distribution of $\C_{e}^\perp$ is given in Table \ref{tab-aaCG1}. The code $\C_{e}$ has
parameters $[n, n-2m, d]$, where
$$
d = \left\{ \begin{array}{ll}
                             3 & \mbox{ if } \gcd(e,m)>1; \\
                             5 & \mbox{ if } \gcd(e,m)=1.
\end{array}
\right.
$$

When $m$ is even and $e=m/2$, the dimension of  $\C_{e}^\perp$ is $3m/2$ and the weight
distribution of $\C_{e}^\perp$ is given in Table \ref{tab-aaCG2}. The code $\C_{e}$ has
parameters $[n, n-3m/2, 3]$. 

When $m/\gcd(m,e)$ is even and $1 \leq e < m/2$, $\C_{e}^\perp$ has dimension $2m$ and the weight 
distribution in Table \ref{tab-aaMine}, where $\ell=2\gcd(m, e)$, and $\C_{e}$ has parameters $[n, n-2m, d]$, where 
$$
d = \left\{ \begin{array}{ll}
                             3 & \mbox{ if } \gcd(e,m)>1; \\
                             5 & \mbox{ if } \gcd(e,m)=1.
\end{array}
\right.
$$ 
\end{theorem}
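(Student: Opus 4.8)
The plan is to compute the weight distribution of $\C_e^\perp$ via a trace (Delsarte) representation, and then transfer the conclusions to $\C_e$ by a dimension count and a direct minimum-distance argument. First I would record that the nonzeros of $\C_e^\perp$ lie in the cyclotomic cosets $C_1$ and $C_{1+2^e}$, so every codeword of $\C_e^\perp$ can be written as
$$c(a,b)=\left(\tr(a\alpha^i+b\alpha^{(1+2^e)i})\right)_{i=0}^{n-1},\qquad a,b\in\gf(2^m).$$
When $|C_{1+2^e}|=m$ the map $(a,b)\mapsto c(a,b)$ is injective, giving $\dim\C_e^\perp=2m$; when $m$ is even and $e=m/2$ the monomial collapses, since $x^{1+2^{m/2}}=x\cdot x^{2^{m/2}}\in\gf(2^{m/2})$ forces $\tr(bx^{1+2^{m/2}})\equiv 0$ for every $b\in\gf(2^{m/2})$, so the $b$-part contributes only $m/2$ dimensions and $\dim\C_e^\perp=3m/2$. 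In each case $\dim\C_e=n-\dim\C_e^\perp$ as stated.

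Next I would reduce the Hamming weight to a Weil sum. A direct count of the ones in $c(a,b)$ yields
$$\wt(c(a,b))=2^{m-1}-\tfrac12\,S(a,b),\qquad S(a,b)=\sum_{x\in\gf(2^m)}(-1)^{\tr(ax+bx^{1+2^e})},$$
so the problem becomes the evaluation and enumeration of the sums $S(a,b)$. For $b\neq 0$ the function $Q_b(x)=\tr(bx^{1+2^e})$ is a quadratic form over $\gf(2)$ whose polarisation is the alternating bilinear form $B_b(x,y)=\tr\!\bigl(b(xy^{2^e}+x^{2^e}y)\bigr)$. The key computation is that $x$ lies in the radical of $B_b$ if and only if $x$ is a root of the linearised polynomial $L_b(x)=bx+b^{2^e}x^{2^{2e}}$, whose nonzero roots satisfy $x^{2^{2e}-1}=b^{1-2^e}$; the number of such roots is either $0$ or $\gcd(2^{2e}-1,2^m-1)=2^{\gcd(2e,m)}-1$. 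Since $B_b$ is alternating, the radical dimension has the same parity as $m$, and this is exactly where the case split enters: if $m/\gcd(m,e)$ is odd then $\gcd(2e,m)=\gcd(e,m)$ and every $Q_b$ ($b\neq0$) has the single rank $m-\gcd(m,e)$ (Table \ref{tab-aaCG1}); if $m/\gcd(m,e)$ is even with $e<m/2$ two ranks differing by $\ell=2\gcd(m,e)$ occur (Table \ref{tab-aaMine}); and $e=m/2$ is the degenerate boundary where $L_b=(b+b^{2^{m/2}})x$, so $B_b\equiv0$ for $b\in\gf(2^{m/2})$ and $Q_b$ is Hermitian-like otherwise (Table \ref{tab-aaCG2}).

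With the ranks in hand I would invoke the standard evaluation of quadratic Weil sums over $\gf(2)$: a form whose polarisation has rank $2\rho$ gives $S\in\{0,\pm2^{m-\rho}\}$, the sign being fixed by the elliptic/hyperbolic type of $Q_b$ and by whether the linear part $\tr(ax)$ is consistent with $Q_b$ on the radical. Summing $2^{m-1}-\tfrac12 S(a,b)$ over all pairs $(a,b)$—with $b=0,\,a\neq0$ accounting for $2^m-1$ codewords of the middle weight $2^{m-1}$—then reproduces the three tables; the occurrence of five nonzero weights in Table \ref{tab-aaMine} is what furnishes the ``only if'' half of the three-weight characterisation. The minimum distance of $\C_e$ I would obtain directly from the defining set: weights $1,2$ are impossible, and a weight-$3$ codeword corresponds to distinct nonzero $x_1,x_2$ with $(x_1/x_2)^{2^e-1}=1$, which has a nontrivial solution exactly when $2^{\gcd(e,m)}-1>1$, i.e.\ $\gcd(e,m)>1$; hence $d=3$ in that case (covering also $e=m/2$, where $\gcd(e,m)=m/2>1$), while for $\gcd(e,m)=1$ there is no weight-$3$ word and the almost-perfect-nonlinearity of the Gold monomial $x^{1+2^e}$ rules out weight $4$ and pins $d=5$.

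I expect the main obstacle to be the enumeration underlying the table entries: determining, for the family $\{Q_b:b\in\gf(2^m)\}$, the joint distribution of rank and type, together with the per-$b$ count of linear shifts $a$ that keep $S(a,b)$ nonzero with a prescribed sign. This delicate bookkeeping—essentially the content of Kasami's analysis \cite{Kasa69}—is what produces all of the binomial-type coefficients in Tables \ref{tab-aaCG1}, \ref{tab-aaCG2} and \ref{tab-aaMine}; the trace model, the weight-to-sum reduction, the generic Weil-sum formula, and the distance computation are all routine once this enumeration is granted.
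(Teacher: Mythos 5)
Your proposal is essentially a reconstruction of Kasami's argument, and it is sound in outline; but it is worth knowing that the paper does not prove this theorem at all. Immediately after the statement the author says that the weight distributions of $\C_e^\perp$ ``were indeed proved by Kasami in \cite{Kasa69}'' and that the minimum-distance claims for $\C_e$, stated without proof in \cite{Kasa69}, ``can be proved with the proved weight distribution of $\C_e^\perp$ and Theorem \ref{thm-MI}.'' So the paper's route is: cite Kasami for the tables, then recover $d$ by transforming the dual weight enumerator with the MacWilliams identity and reading off whether $A_3$, $A_4$, $A_5$ vanish. Your route agrees with the cited source on the hard part (trace model, weight-to-Weil-sum reduction, rank of the quadratic form $\Tr(bx^{1+2^e})$ via the linearised polynomial $bx+b^{2^e}x^{2^{2e}}$), and you are candid that the rank/type enumeration producing the table entries is exactly the bookkeeping you would have to import from \cite{Kasa69} --- which is the same black box the paper invokes. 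Where you genuinely diverge is the minimum distance: your direct argument (a weight-$3$ word forces $(y_1/y_2)^{2^e-1}=1$, solvable iff $\gcd(e,m)>1$; weight $4$ is excluded by the APN property of $x^{1+2^e}$ when $\gcd(e,m)=1$) is more self-contained and more explanatory than the paper's ``tedious'' MacWilliams computation, and it is a legitimate alternative.

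Two soft spots to tighten. First, the parity of the radical does not by itself force the radical of $B_b$ to have dimension $\gcd(m,e)$ when $m/\gcd(m,e)$ is odd but $\gcd(m,e)$ is even; you need the solvability criterion for $x^{2^{2e}-1}=b^{1-2^e}$, which holds for every $b\neq 0$ precisely when $\gcd(2e,m)$ divides $e$, i.e.\ when $m/\gcd(m,e)$ is odd --- that, not parity, is what makes the rank uniform in the first case and bimodal in the third. Second, ruling out weights $3$ and $4$ only gives $d\geq 5$; to pin $d=5$ you still need $A_5>0$, which does not follow from APN-ness and in practice comes from the very MacWilliams transform the paper points to (or from exhibiting a weight-$5$ word). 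Neither issue undermines the architecture of your argument.
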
 

The weight distributions of the code $\C_{e}^\perp$ documented in Theorem \ref{thm-calder} were 
indeed proved by Kasami in \cite{Kasa69}. However, the conclusions on the minimum distance $d$ 
of $\C_{e}$ were stated in \cite{Kasa69} without being proved. We inform the reader that they can 
be proved with the proved weight distribution of $\C_{e}^\perp$ and Theorem \ref{thm-MI}, though 
the details of proof are tedious in some cases.  

We would find the parameters of the $2$-designs held in the codes $\overline{C_e}$ and 
$\overline{C_e}^\perp$, and need to know the weight distributions of these two codes, 
which can be derived from those of the code $\C_{e}^\perp$ described in Theorem \ref{thm-calder}. 
We first determine the weight distribution of $\overline{C_e}^\perp$.

\begin{table}[ht]
\center 
\caption{Weight distribution IV}\label{tab-aaCG1jan18} 
{
\begin{tabular}{ll}
\hline
Weight $w$    & No. of codewords $A_w$  \\ \hline
$0$                                                        & $1$ \\ 
$2^{m-1}-2^{m-1-h}$           & $(2^m-1)2^{2h}$ \\ 
$2^{m-1}$                             & $(2^m-1)(2^{m+1}-2^{2h+1}+2)$ \\ 
$2^{m-1}+2^{m-1-h}$           & $(2^m-1)2^{2h}$ \\  
$2^m$                           &  $1$ \\ \hline
\end{tabular}
}
\end{table}

\begin{table}[ht]
\center 
\caption{Weight distribution V}\label{tab-aaCG2jan18}
{
\begin{tabular}{ll}
\hline
Weight $w$    & No. of codewords $A_w$  \\ \hline
$0$                                                        & $1$ \\ 
$2^{m-1}-2^{(m-2)/2}$          & $(2^{m/2}-1)2^{m}$ \\ 
$2^{m-1}$                               & $2^{m+1}-2$ \\ 
$2^{m-1}+2^{(m-2)/2}$          & $(2^{m/2}-1)2^{m}$ \\ 
$2^m$                          & $1$ \\ \hline
\end{tabular}
}
\end{table}

\begin{table}[ht]
\center 
\caption{Weight distribution VI}\label{tab-aaMinejan18}
{
\begin{tabular}{ll}
\hline
Weight $w$    & No. of codewords $A_w$  \\ \hline
$0$                                                        & $1$ \\ 
$2^{m-1}-2^{(m+\ell-2)/2}$          & $2^{m-\ell} (2^m-1)/(2^{\ell/2}+1)$ \\ 
$2^{m-1}-2^{(m-2)/2}$          & $2^{(2m+\ell)/2} (2^m-1)/(2^{\ell/2}+1)$ \\ 
$2^{m-1}$                               & $2( (2^{\ell/2} -1)2^{m-\ell} +1 )(2^m - 1)$ \\ 
$2^{m-1}+2^{(m-2)/2}$          & $2^{(2m+\ell)/2} (2^m-1)/(2^{\ell/2}+1)$ \\ 
$2^{m-1}+2^{(m+\ell-2)/2}$          & $2^{m-\ell} (2^m-1)/(2^{\ell/2}+1)$ \\ 
$2^m$                               & $1$ \\ \hline 
\end{tabular}
}
\end{table}

The following theorem provides information on the parameters of $\overline{\C_e}$ and its dual 
$\overline{\C_{e}}^\perp$. 

\begin{theorem}\label{thm-calderjan18} 
Let $m \ge 4$ and $1 \leq e \le m/2$. 
When $m/\gcd(m,e)$ is odd, define $h=(m-\gcd(m,e))/2$. Then $\overline{\C_{e}}^\perp$ 
has parameters $[2^m, 2m+1, 2^{m-1}-2^{m-1-h}]$, and the weight
distribution in Table \ref{tab-aaCG1jan18}. 
The parameters of $\overline{\C_{e}}$ are $[2^m, 2^m-1-2m, \overline{d}]$, 
where 
$$
\overline{d} = \left\{ \begin{array}{ll}
                             4 & \mbox{ if } \gcd(e,m)>1; \\
                             6 & \mbox{ if } \gcd(e,m)=1.
\end{array}
\right.
$$

When $m$ is even and $e=m/2$, $\overline{\C_{e}}^\perp$ has parameters $[2^m, 1+3m/2, 2^{m-1}-2^{(m-2)/2}]$ and the weight
distribution in Table \ref{tab-aaCG2jan18}. 
The code $\overline{\C_{e}}$ has parameters $[2^m, 2^m-1-3m/2, 4]$. 

When $m/\gcd(m,e)$ is even and $1 \leq e < m/2$, $\overline{\C_{e}}^\perp$ has parameters
$$[2^m, \, 2m+1,\, 2^{m-1}-2^{(m+\ell-2)/2}]$$ and the weight 
distribution in Table \ref{tab-aaMinejan18}, where $\ell=2\gcd(m, e)$, and $\overline{\C_{e}}$ has parameters $[2^m, 2^m-1-2m, \overline{d}]$, where 
$$
\overline{d} = \left\{ \begin{array}{ll}
                             4 & \mbox{ if } \gcd(e,m)>1; \\
                             6 & \mbox{ if } \gcd(e,m)=1.
\end{array}
\right.
$$ 
\end{theorem}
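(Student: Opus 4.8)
The plan is to obtain both codes from Theorem~\ref{thm-allcodes}, applied with $\C_e^\perp$ in the role of the code $\C$. Under this identification $\C^\perp$ becomes $(\C_e^\perp)^\perp=\C_e$, the extended code $\overline{\C^\perp}$ becomes $\overline{\C_e}$, and $\overline{\C^\perp}^\perp$ becomes $\overline{\C_e}^\perp$. Since Theorem~\ref{thm-calder} already records the parameters of $\C_e$ and the full weight distribution of $\C_e^\perp$ in each of the three cases ($m/\gcd(m,e)$ odd; $m$ even and $e=m/2$; $m/\gcd(m,e)$ even with $e<m/2$), the three cases of the present theorem will be treated in parallel, and Tables~\ref{tab-aaCG1jan18}, \ref{tab-aaCG2jan18} and \ref{tab-aaMinejan18} should fall out after a short computation.

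First I would read off the parameters of $\overline{\C_e}$ from part~(2) of Theorem~\ref{thm-allcodes}: the length is $n+1=2^m$, the dimension equals $\dim(\C_e)=n-\dim(\C_e^\perp)$, which is $2^m-1-2m$ in the first and third cases and $2^m-1-3m/2$ in the second, and the minimum distance is $\overline{d}=d+1$ because $d=d(\C_e)\in\{3,5\}$ is odd in every case. This gives $\overline{d}\in\{4,6\}$ exactly as stated.

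For $\overline{\C_e}^\perp$, part~(3) of Theorem~\ref{thm-allcodes} immediately yields $\dim(\overline{\C_e}^\perp)=\dim(\C_e^\perp)+1$, the fact that all weights are even, and that the nonzero weights are precisely the $w_i$, the complements $2^m-w_i$, and $2^m$, where the $w_i$ run over the nonzero weights of $\C_e^\perp$. To pin down the multiplicities I would use the explicit generator matrix $\left[\begin{smallmatrix}\bone & 1\\ G & \bzero\end{smallmatrix}\right]$ exhibited in the proof of Theorem~\ref{thm-allcodes}, with $G$ a generator matrix of $\C_e^\perp$. Enumerating the codewords shows that each is either $(c,0)$ with $c\in\C_e^\perp$, of weight $\wt(c)$, or $(\bone+c,1)$ with $c\in\C_e^\perp$, of weight $2^m-\wt(c)$; these two families are disjoint and exhaust the $2\,|\C_e^\perp|$ codewords. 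Hence $\overline{A}_w^\perp=A_w(\C_e^\perp)+A_{2^m-w}(\C_e^\perp)$, with the convention that $A_j(\C_e^\perp)=0$ for $j\notin\{0,1,\dots,n\}$.

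It then remains to substitute the entries of Tables~\ref{tab-aaCG1}, \ref{tab-aaCG2} and \ref{tab-aaMine} into this formula. Because $2^m-(2^{m-1}\pm 2^{t})=2^{m-1}\mp 2^{t}$, the two outer weights on either side of $2^{m-1}$ pair up and their multiplicities add; the central weight $2^{m-1}$ is self-complementary, so its multiplicity is doubled; and the weight $0$ pairs with $2^m$, producing the single all-one codeword $\overline{A}_{2^m}^\perp=1$. Collecting these sums and simplifying the resulting expressions in powers of two reproduces Tables~\ref{tab-aaCG1jan18}, \ref{tab-aaCG2jan18} and \ref{tab-aaMinejan18}, and the claimed minimum distance of $\overline{\C_e}^\perp$ is simply the smallest nonzero weight appearing there. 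The argument is almost entirely a consequence of Theorem~\ref{thm-allcodes}, so the only point that needs care is the bookkeeping at the self-complementary weight $2^{m-1}$ and at the pair $\{0,2^m\}$, where one must neither double-count nor lose the contribution of the zero codeword; everything else is routine arithmetic.
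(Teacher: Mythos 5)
Your proposal is correct, and it reaches the same conclusions as the paper, but the key step --- determining the weight multiplicities of $\overline{\C_e}^\perp$ --- is done by a genuinely different method. The paper's proof uses Theorem~\ref{thm-allcodes} only to identify \emph{which} weights can occur, then invokes the presence of the all-one codeword to force the symmetry $A_{2^{m-1}+j}=A_{2^{m-1}-j}$, and finally pins down the remaining unknown frequencies by solving the first four Pless power moments (this is where the hypothesis $\overline{d}\ge 4$ for $\overline{\C_e}$ enters). You instead exploit the explicit generator matrix $\left[\begin{smallmatrix}\bone & 1\\ G & \bzero\end{smallmatrix}\right]$ from the proof of Theorem~\ref{thm-allcodes} to enumerate the codewords of $\overline{\C_e}^\perp$ outright as $(c,0)$ and $(\bone+c,1)$ with $c\in\C_e^\perp$, yielding the closed formula $\overline{A}^\perp_w = A_w(\C_e^\perp)+A_{2^m-w}(\C_e^\perp)$; substituting Tables~\ref{tab-aaCG1}--\ref{tab-aaMine} then reproduces Tables~\ref{tab-aaCG1jan18}--\ref{tab-aaMinejan18} (I checked the arithmetic in all three cases and it matches). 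Your route is arguably cleaner: it needs no power-moment identities, does not use the minimum distance of $\overline{\C_e}$ as an input to the frequency computation, treats the self-complementary weight $2^{m-1}$ and the pair $\{0,2^m\}$ without any case analysis, and handles the six-weight third case exactly as easily as the three-weight cases, whereas the paper's ``the other parts can be proved similarly'' quietly requires more power moments there. The paper's method has the advantage of not depending on the particular form of the generator matrix, only on the set of weights and the dual distance, so it generalizes to situations where the weight distribution of the smaller code is not fully known. The treatment of the parameters of $\overline{\C_e}$ itself (length, dimension, and $\overline{d}=d+1$ since $d$ is odd) is identical in both arguments.
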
 

\begin{proof}
We prove only the conclusions of the first part. The conclusions of the other parts can be proved 
similarly. 

Consider now the case that $m/\gcd(m,e)$ is odd. Since the minimum weight of $\C_e$ is odd,  
the minimum distance of $\overline{\C_e}$ is one more than that of $\C_e$. This proves the 
conclusion on the minimum distance of $\overline{\C_e}$. By definition, $\dim(\C_e)=\dim(\overline{\C_e})$, and the length of $\overline{\C_e}$ is $\bar{n}=n+1=2^m$. 

The dimension of $\overline{\C_e}^\perp$ follows from that of $\overline{\C_e}$. It remains 
to prove the weight distribution of $\overline{\C_e}^\perp$. By definition, $\overline{\C_e}$ 
has only even weights. It then follows that the all-one vector is a codeword of 
$\overline{\C_e}^\perp$. Then by Theorems \ref{thm-allcodes} 
and \ref{thm-calder}, $\overline{\C_e}^\perp$ has all the following weights 
$$ 
2^{m-1} \pm 2^{m-1-h}, \ 2^{m-1} \pm 2^{(m-2)/2}, \ 2^{m-1}, \ 2^m. 
$$ 
Due to symmetry of weights and the existence of the all-one vector in $\overline{\C_e}^\perp$, 
$$ 
A_{2^{m-1} + 2^{m-1-h}}=A_{2^{m-1} - 2^{m-1-h}}, \ 
A_{2^{m-1} + 2^{(m-2)/2}}=A_{2^{m-1} - 2^{(m-2)/2}}. 
$$ 
Note that the minimum distance of $\overline{\C_e}$ is $4$ or $6$. Solving the first four 
Pless power moments yields the frequencies of all the weights. 
 
\end{proof}

Combining Theorem \ref{thm-mymainjan18} and (\ref{eqn-tdesignblocknum}), we deduce the 
following. 
 
\begin{theorem}\label{thm-designsdualjan18} 
Let $m \ge 4$ and $1 \leq e \le m/2$. 
When $m/\gcd(m,e)$ is odd, define $h=(m-\gcd(m,e))/2$. Then $\overline{\C_{e}}^\perp$ 
holds a $2$-$(2^m, k, \lambda)$ design for the following pairs $(k, \lambda)$: 
\begin{itemize}
\item $(k, \lambda)=\left(2^{m-1} \pm 2^{m-1-h},\  (2^{2h-1}\pm 2^{h-1})(2^{m-1} \pm 2^{m-1-h}-1)\right)$.  
\item $(k, \lambda)=\left(2^{m-1}, \  (2^{m-1}-1)(2^{m}-2^{2h}+1) \right).$ 
\end{itemize}

When $m$ is even and $e=m/2$, $\overline{\C_{e}}^\perp$ 
holds a $2$-$(2^m, k, \lambda)$ design for the following pairs $(k, \lambda)$: 
\begin{itemize}
\item $(k, \lambda)=\left(2^{m-1} \pm 2^{(m-2)/2}, \ 
       2^{(m-2)/2}(2^{m/2} - 1)(2^{(m-2)/2} \pm 1) \right).$ 
\item $(k, \lambda)=\left(2^{m-1}, \ 2^{m-1}-1\right).$ 
\end{itemize}

When $m/\gcd(m,e)$ is even and $1 \leq e < m/2$, $\overline{\C_{e}}^\perp$ 
holds a $2$-$(2^m, k, \lambda)$ design for the following pairs $(k, \lambda)$: 
\begin{itemize}
\item $(k, \lambda)=\left(2^{m-1} \pm 2^{(m+\ell-2)/2}, \ 
            \frac{(2^{m-1} \pm 2^{(m+\ell-2)/2})(2^{m-1} \pm 2^{(m+\ell-2)/2}-1)}{2^\ell(2^{\ell/2}+1)} \right),$ 
\item $(k, \lambda)=\left(2^{m-1} \pm 2^{(m-2)/2}, \ 
       \frac{2^{(m+\ell-2)/2}(2^{m/2} \pm 1)(2^{m-1} \pm 2^{(m-2)/2}-1)}{2^{\ell/1}-1} \right),$ 
\item $(k, \lambda)=\left(2^{m-1}, \ ( (2^{\ell/2} -1)2^{m-\ell} +1 )(2^{m-1} - 1) \right),$ 
\end{itemize}  
where $\ell=2\gcd(m, e)$. 
\end{theorem}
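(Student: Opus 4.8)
The plan is to combine two ingredients that are already in hand: the fact that each $\overline{\C_e}^\perp$ holds $2$-designs on every nonzero weight (Theorem~\ref{thm-mymainjan18}), and the explicit weight distributions of $\overline{\C_e}^\perp$ recorded in Tables~\ref{tab-aaCG1jan18}, \ref{tab-aaCG2jan18} and \ref{tab-aaMinejan18} (Theorem~\ref{thm-calderjan18}). Once we know that the codewords of a given nonzero weight $k$ form a $2$-design, the only remaining task is to compute the index $\lambda$, and for this the block-counting identity~(\ref{eqn-tdesignblocknum}) with $t=2$, namely $b=\lambda\binom{v}{2}/\binom{k}{2}$, does all the work. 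Here $v=2^m$ and $b=A_k$ is read directly off the relevant table, so $\lambda$ is determined by $\lambda = A_k\binom{k}{2}/\binom{2^m}{2} = A_k\,k(k-1)/(2^m(2^m-1))$.

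Concretely, I would treat the three cases in turn exactly as the theorem statement is organised. In the first case ($m/\gcd(m,e)$ odd), I substitute $v=2^m$, the weight $k=2^{m-1}\pm 2^{m-1-h}$, and the corresponding frequency $A_k=(2^m-1)2^{2h}$ from Table~\ref{tab-aaCG1jan18} into the formula for $\lambda$; the factor $2^m-1$ cancels against the denominator, leaving $\lambda = 2^{2h}k(k-1)/2^m$, and after writing $k=2^{m-1}\pm 2^{m-1-h}$ one checks that $k/2^{m-1}\cdot 2^{2h}/2 = 2^{2h-1}\pm 2^{h-1}$, which reproduces the claimed $\lambda=(2^{2h-1}\pm 2^{h-1})(2^{m-1}\pm 2^{m-1-h}-1)$. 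The same substitution with $k=2^{m-1}$ and $A_k=(2^m-1)(2^{m+1}-2^{2h+1}+2)$ gives the middle weight's index. The second case (even $m$, $e=m/2$) and the third case ($m/\gcd(m,e)$ even) are handled identically, drawing the frequencies $A_k$ from Tables~\ref{tab-aaCG2jan18} and \ref{tab-aaMinejan18} respectively and simplifying.

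The only genuine point that requires care, rather than pure algebra, is the justification that a $2$-design exists at all for each weight: this is not automatic from the frequencies but comes from affine-invariance. Theorem~\ref{thm-jan182} establishes that $\overline{\C_e}$ is affine-invariant, hence $\AGL(1,2^m)\subseteq \PAut(\overline{\C_e})$ acts $2$-transitively; since the permutation automorphism group of a binary code and of its dual coincide, $\overline{\C_e}^\perp$ is likewise held $2$-transitively, and Theorem~\ref{thm-designCodeAutm} (equivalently Theorem~\ref{thm-mymainjan18}) then guarantees that the codewords of every nonzero weight form a $2$-design. With the design property secured, I invoke~(\ref{eqn-tdesignblocknum}) once per weight.

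I expect the main obstacle to be purely bookkeeping: ensuring the algebraic simplification of $\lambda = A_k\,k(k-1)/(2^m(2^m-1))$ is carried through correctly in the third case, where the frequencies in Table~\ref{tab-aaMinejan18} carry the awkward factor $1/(2^{\ell/2}+1)$ and the weights involve $2^{(m+\ell-2)/2}$. There the cancellation of $2^m-1$ still occurs, but the surviving powers of two and the denominator $2^{\ell/2}+1$ must be tracked with some patience to land on the stated forms such as $\lambda = (2^{m-1}\pm 2^{(m+\ell-2)/2})(2^{m-1}\pm 2^{(m+\ell-2)/2}-1)/\bigl(2^\ell(2^{\ell/2}+1)\bigr)$. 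No conceptual difficulty arises; the verification is a finite check of three tables against one closed-form identity.
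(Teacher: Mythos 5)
Your proposal is correct and follows essentially the same route as the paper: the paper's proof is precisely ``combine Theorem~\ref{thm-mymainjan18} (the $2$-design property for every nonzero weight of $\overline{\C_{e}}^\perp$) with the frequencies $A_k$ from Tables~\ref{tab-aaCG1jan18}--\ref{tab-aaMinejan18} and the identity~(\ref{eqn-tdesignblocknum})'', which is exactly your plan, and your sample computation of $\lambda$ in the first case checks out. The only cosmetic difference is that you justify the design property of the dual via $\PAut(\overline{\C_e})=\PAut(\overline{\C_e}^\perp)$ and Theorem~\ref{thm-designCodeAutm}, whereas the paper routes this through Theorem~\ref{thm-dualdesignbinary}; both are valid and the conclusion is already packaged in Theorem~\ref{thm-mymainjan18}.
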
 

To determine the parameters of the $2$-designs held in the extended code $\overline{\C_{e}}$, we need to 
find out the weight distribution of $\overline{\C_{e}}$. In theory, the weight distribution of 
$\overline{\C_{e}}$ can be settled using the weight enumerator of $\overline{\C_{e}}^\perp$ 
given in Tables \ref{tab-aaCG1jan18}, \ref{tab-aaCG2jan18}, and \ref{tab-aaMinejan18}. However, 
it is practically hard to find a simple expression of the weight distribution of $\overline{\C_{e}}$. 

In the rest of this section, we consider only the weight distribution of $\overline{\C_{e}}$ 
in a special case, in order to construct an infinite family of Steiner systems $S(2, 4, 2^m)$ 
for all $m \equiv 2 \pmod{4}$. 

As a special case of Theorem \ref{thm-calderjan18}, we have the following. 

\begin{corollary} 
Let $m \equiv 2 \pmod{4}$ and $2 \leq e \leq \lfloor m/2 \rfloor$. If $\gcd(m, e)=2$, then 
$\overline{\C_e}^\perp$ has parameters $[2^m, 2m+1, 2^{m-1}-2^{m/2}]$ and weight enumerator 
\begin{equation}\label{eqn-wtenumeratord}
\overline{A}^\perp(z)=1 + uz^{2^{m-1}-2^{m/2}} + vz^{2^{m-1}} + uz^{2^{m-1}+2^{m/2}} + z^{2^m},  
\end{equation} 
where 
\begin{eqnarray}\label{eqn-myuv}
u=(2^m-1)2^{m-2}, \ v=(2^m-1)(2^{m+1}-2^{m-1}+2). 
\end{eqnarray}
\end{corollary}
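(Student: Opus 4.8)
The plan is to recognise this corollary as a direct specialisation of the first case of Theorem~\ref{thm-calderjan18}, so the real work is a substitution together with a check that the hypotheses select the correct case. I would not re-derive any weight distribution from scratch; everything needed is already packaged in Theorem~\ref{thm-calderjan18} and Table~\ref{tab-aaCG1jan18}.

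First I would confirm that the conditions $m \equiv 2 \pmod{4}$ and $\gcd(m,e)=2$ force us into the case where $m/\gcd(m,e)$ is odd. Writing $m=4k+2$, we get $m/\gcd(m,e)=m/2=2k+1$, which is odd, so the first case of Theorem~\ref{thm-calderjan18} applies. One should also verify that we do not instead fall into the exceptional case $e=m/2$ of that theorem: if $e=m/2$ then $\gcd(m,e)=\gcd(m,m/2)=m/2$, and this equals $2$ only when $m=4$, which is excluded by $m\equiv 2\pmod{4}$. Hence $e<m/2$ and $\gcd(m,e)=2$ unambiguously.

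Next I would set $h=(m-\gcd(m,e))/2=(m-2)/2$, so that $2^{m-1-h}=2^{m/2}$, $2^{2h}=2^{m-2}$, and $2^{2h+1}=2^{m-1}$. Substituting $h=(m-2)/2$ into the parameters stated in the first case of Theorem~\ref{thm-calderjan18} yields $[2^m,\,2m+1,\,2^{m-1}-2^{m/2}]$, matching the claimed parameters. Reading off Table~\ref{tab-aaCG1jan18} under the same substitutions, the nonzero weights become $2^{m-1}\pm 2^{m/2}$, each occurring $(2^m-1)2^{m-2}=u$ times, the weight $2^{m-1}$ occurring $(2^m-1)(2^{m+1}-2^{m-1}+2)=v$ times, and the weight $2^m$ occurring once. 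Assembling these frequencies into a generating function produces exactly the weight enumerator (\ref{eqn-wtenumeratord}) with $u$ and $v$ as in (\ref{eqn-myuv}).

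There is no genuine obstacle here, since all the content is already contained in Theorem~\ref{thm-calderjan18}. The only point requiring any care is the elementary case analysis that pins down $h=(m-2)/2$ and rules out the exceptional branch $e=m/2$; the remaining step—rewriting the powers of two appearing in Table~\ref{tab-aaCG1jan18}—is purely routine arithmetic.
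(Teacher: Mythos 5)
Your proposal is correct and matches the paper's treatment exactly: the paper presents this corollary with no separate proof, introducing it only as ``a special case of Theorem~\ref{thm-calderjan18},'' which is precisely the specialisation $h=(m-2)/2$ you carry out. Your additional checks (that $m/\gcd(m,e)=m/2$ is odd and that the exceptional branch $e=m/2$ cannot occur under $\gcd(m,e)=2$ with $m\equiv 2\pmod 4$) are sound and, if anything, slightly more careful than the paper itself.
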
    

\begin{theorem}\label{thm-desiredwtdist} 
Let $m \equiv 2 \pmod{4}$ and $2 \leq e \leq \lfloor m/2 \rfloor$. If $\gcd(m, e)=2$, then 
$\overline{\C_e}$ has parameters $[2^m, 2^m-1-2m, 4]$ and weight distribution  
\begin{eqnarray*}
2^{2m+1}\overline{A}_k   
& = &  (1+(-1)^k) \binom{2^m}{k} + 
\frac{1+(-1)^k}{2} (-1)^{\lfloor k/2 \rfloor} \binom{2^{m-1}}{\lfloor k/2 \rfloor} v  + \\ 
& & u \sum_{\substack{0 \le i \le 2^{m-1}-2^{m/2} \\
0\le j \le 2^{m-1}+2^{m/2} \\i+j=k}}[(-1)^i +(-1)^j] \binom{2^{m-1}-2^{m/2}} {i} \binom{2^{m-1}+2^{m/2}}{j}, 
\end{eqnarray*}
for $0 \le k \le 2^m$, where $u$ and $v$ are given in (\ref{eqn-myuv}). 
\end{theorem}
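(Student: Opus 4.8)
The plan is to compute the weight distribution of $\overline{\C_e}$ directly from the known weight enumerator of its dual $\overline{\C_e}^\perp$ by invoking the MacWilliams identity (Theorem~\ref{thm-MI}). The preceding corollary hands us the explicit weight enumerator $\overline{A}^\perp(z)$ of $\overline{\C_e}^\perp$ in closed form (equation~(\ref{eqn-wtenumeratord})), together with the dimension $2m+1$ of that code. Since $\overline{\C_e}$ has length $2^m$ and dimension $2^m-1-2m$, its dual has dimension $2m+1$, so the MacWilliams transform will recover $\overline{A}(z)$, the weight enumerator of $\overline{\C_e}$, from $\overline{A}^\perp(z)$. The parameters $[2^m, 2^m-1-2m, 4]$ claimed for $\overline{\C_e}$ follow immediately from Theorem~\ref{thm-calderjan18} specialized to $\gcd(m,e)=2$ (which forces $\gcd(e,m)>1$ and hence $\overline{d}=4$), so the only real content is the weight-distribution formula.

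First I would write the binary ($q=2$) MacWilliams identity in the form that expresses $\overline{\C_e}$ in terms of its dual, namely
\begin{equation*}
\sum_{k=0}^{2^m} \overline{A}_k z^k = 2^{-(2m+1)} (1+z)^{2^m} \, \overline{A}^\perp\!\left(\frac{1-z}{1+z}\right),
\end{equation*}
reading off $2^{2m+1}\overline{A}_k$ as the coefficient of $z^k$ on the right. Substituting the four nontrivial terms of $\overline{A}^\perp$ (the all-zero term $1$, the two weight-$w$ terms with coefficient $u$ at $w=2^{m-1}\mp 2^{m/2}$, the central term with coefficient $v$ at $w=2^{m-1}$, and the all-one term $z^{2^m}$) turns each summand $z^w$ into $(1-z)^w(1+z)^{2^m-w}$. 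The constant term $1$ contributes $(1+z)^{2^m}$ and the top term $z^{2^m}$ contributes $(1-z)^{2^m}$; together these give $(1+z)^{2^m}+(1-z)^{2^m}$, whose $z^k$-coefficient is $(1+(-1)^k)\binom{2^m}{k}$, matching the first summand in the theorem.

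Next I would treat the two $u$-terms together. Because the weights $2^{m-1}\pm 2^{m/2}$ are symmetric about $2^{m-1}$ and paired under complementation by the all-one codeword, their combined contribution is $u\big[(1-z)^{2^{m-1}-2^{m/2}}(1+z)^{2^{m-1}+2^{m/2}} + (1-z)^{2^{m-1}+2^{m/2}}(1+z)^{2^{m-1}-2^{m/2}}\big]$; expanding each factor by the binomial theorem and collecting the coefficient of $z^k$ yields exactly the double sum over $i+j=k$ with the bracket $[(-1)^i+(-1)^j]$ and the two binomial coefficients. The central $v$-term gives $v(1-z)^{2^{m-1}}(1+z)^{2^{m-1}} = v(1-z^2)^{2^{m-1}}$, so only even $k=2\lfloor k/2\rfloor$ survive and the coefficient is $(-1)^{\lfloor k/2\rfloor}\binom{2^{m-1}}{\lfloor k/2\rfloor}$; inserting the parity factor $\tfrac{1+(-1)^k}{2}$ records that this term vanishes for odd $k$, reproducing the second summand. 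Assembling the three pieces and dividing by $2^{2m+1}$ gives the stated identity.

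The main obstacle is purely bookkeeping rather than conceptual: one must organize the binomial expansions so that the symmetric $u$-pair collapses into a single double sum with the $[(-1)^i+(-1)^j]$ weighting, and one must verify that the parity prefactors $1+(-1)^k$ and $\tfrac{1+(-1)^k}{2}$ correctly encode the fact that $\overline{\C_e}$ contains only even-weight codewords (so $\overline{A}_k=0$ for odd $k$, a consistency check already guaranteed by $\overline{\C_e}^\perp$ containing the all-one vector). I would close by confirming this evenness against the formula---every displayed term indeed vanishes when $k$ is odd---which both validates the computation and confirms the length-dimension-distance parameters asserted at the start of the theorem.
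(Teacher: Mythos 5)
Your proposal is correct and follows exactly the paper's route: the parameters are quoted from Theorem~\ref{thm-calderjan18} and the weight distribution is obtained by applying the MacWilliams identity (Theorem~\ref{thm-MI}) to the dual weight enumerator in~(\ref{eqn-wtenumeratord}). The paper's proof is just a two-line citation of these facts, whereas you carry out the binomial expansion explicitly (correctly collapsing the symmetric $u$-pair into the $[(-1)^i+(-1)^j]$ double sum), so your write-up is a more detailed version of the same argument.
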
 

\begin{proof} 
The parameters of $\overline{\C_e}$ were proved in Theorem \ref{thm-calderjan18}. 
The weight distribution formula for $\overline{\C_e}$ follows from  
the weight enumerator $\overline{A}^\perp(z)$ of $\overline{\C_e}^\perp$ in (\ref{eqn-wtenumeratord}) 
and Theorem \ref{thm-MI}. 
\end{proof} 

We are now ready to prove the main result of this paper. 

\begin{theorem}\label{thm-mmmainresult}
Let $m \equiv 2 \pmod{4}$, $2 \leq e \leq \lfloor m/2 \rfloor$, and $\gcd(m, e)=2$. Then 
the supports of the codewords of weight $4$ in $\overline{\C_e}$ form a $2$-$(2^m, 4, 1)$ 
design, i.e., a Steiner system $S(2, 4, 2^m)$. 
\end{theorem}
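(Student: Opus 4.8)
The plan is to combine the affine-invariance already established with an exact count of the weight-$4$ codewords. First I would invoke Theorem \ref{thm-mymainjan18}: since $\overline{\C_e}$ is affine-invariant (Theorem \ref{thm-jan182}) and, by Theorem \ref{thm-calderjan18}, has minimum distance $\overline{d}=4$ in the present regime $\gcd(e,m)=2>1$, the weight-$4$ codewords are nonempty ($\overline{A}_4\neq 0$) and their supports form a $2$-$(2^m,4,\lambda)$ design for some positive integer $\lambda$. It therefore remains only to show $\lambda=1$.

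Second, I would pin down $\lambda$ by a block count. Because $\overline{\C_e}$ is binary, a codeword of weight $4$ is determined by its support, so distinct weight-$4$ codewords yield distinct $4$-subsets of $\cP$; hence the number of blocks is exactly $b=\overline{A}_4$. On the other hand, the design identity (\ref{eqn-tdesignblocknum}) with $t=2$, $k=4$, $v=2^m$ gives $b=\lambda\binom{2^m}{2}/\binom{4}{2}=\lambda\,2^m(2^m-1)/12$. Equating the two expressions reduces the theorem to the single numerical identity $\overline{A}_4=2^m(2^m-1)/12$, since this forces $\lambda=1$ (and in particular $\lambda\geq 1$ is automatic).

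Third, I would evaluate $\overline{A}_4$ from the weight-distribution formula of Theorem \ref{thm-desiredwtdist} at $k=4$. Writing $a=2^{m-1}-2^{m/2}$ and $b=2^{m-1}+2^{m/2}$, the first (even-weight) term contributes $2\binom{2^m}{4}$, the middle term contributes $\binom{2^{m-1}}{2}v$, and the convolution term contributes $2u\big[\binom{b}{4}-a\binom{b}{3}+\binom{a}{2}\binom{b}{2}-\binom{a}{3}b+\binom{a}{4}\big]$, with $u,v$ as in (\ref{eqn-myuv}). The goal is to show that the sum of these three contributions equals $2^{2m+1}\cdot 2^m(2^m-1)/12=2^{3m}(2^m-1)/6$.

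The main obstacle is this last algebraic simplification. I would exploit the fact that the bracketed convolution term is invariant under swapping $a$ and $b$, i.e., under $2^{m/2}\mapsto -2^{m/2}$; since the other two contributions are already expressed through $2^m$ and $2^{m-1}$, the whole expression is even in $2^{m/2}$, so only even powers of $2^{m/2}$ survive and, using $(2^{m/2})^2=2^m$, the identity collapses to a polynomial identity purely in $2^m$. After substituting the explicit $u$ and $v$, verifying the required cancellation of the top-degree terms is routine but tedious. As a sanity check, for $m=6$ one has $a=24$, $b=40$, $u=1008$, $v=6174$, and the three contributions $1270752$, $3062304$, and $-1580544$ sum to $2752512=2^{13}\cdot 336$, giving $\overline{A}_4=336=64\cdot 63/12$ and hence $\lambda=1$; the general case is this same computation carried out symbolically.
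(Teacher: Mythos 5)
Your proposal is correct and is essentially the paper's own argument: the paper likewise appeals (implicitly) to the affine-invariance result of Theorem \ref{thm-mymainjan18}, evaluates $\overline{A}_4$ from the weight-distribution formula of Theorem \ref{thm-desiredwtdist} to get $\overline{A}_4=2^{m-1}(2^m-1)/6=2^m(2^m-1)/12$, and then reads off $\lambda=\overline{A}_4\binom{4}{2}/\binom{2^m}{2}=1$ from (\ref{eqn-tdesignblocknum}). You actually supply more detail than the paper does (the explicit expansion of the convolution term, the $2^{m/2}\mapsto-2^{m/2}$ symmetry that collapses the identity to one in $2^m$, and the $m=6$ numerical check), so there is nothing to fix.
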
 

\begin{proof}
Using the weight distribution formula $\overline{A}_k$ given in Theorem \ref{thm-desiredwtdist}, 
we obtain 
$$ 
\overline{A}_4= \frac{2^{m-1}(2^m-1)}{6}. 
$$ 
It then follows that 
$$ 
\lambda=\overline{A}_4 \frac{\binom{4}{2}}{\binom{2^m}{2}}=1. 
$$ 
This completes the proof. 
\end{proof} 

For every $m \equiv 2 \pmod{4}$ and $m \geq 6$, we can choose $e=2e_1$ with $\gcd(m/2, e_1)=1$ 
and $e_1 \leq \lfloor m \rfloor/2$. Such $e$ will satisfy the conditions in Theorem \ref{thm-mmmainresult}. At least we can choose $e=2$. This means that for every $m \equiv 2 \pmod{4}$ 
with $m \geq 6$, Theorem \ref{thm-mmmainresult} gives at least one Steiner system $S(2, 4, 2^m)$. 
In fact, it constructs more than one Steiner system $S(2, 4, 2^m)$. For example, when $m=14$, we 
can choose $e$ to be any element of $\{2, 4, 6\}$. 
Therefore, Theorem \ref{thm-mmmainresult} gives an infinite family of Steiner system $S(2, 4, 2^m)$.

In addition to the infinite family of Steiner systems $S(2, 4, 2^m)$, Theorem \ref{thm-mmmainresult} 
gives many other $2$-designs. Below we present two more examples. 

\begin{theorem}\label{thm-wt6design}
Let $m \equiv 2 \pmod{4}$, $2 \leq e \leq \lfloor m/2 \rfloor$, and $\gcd(m, e)=2$. Then 
the supports of the codewords of weight $6$ in $\overline{\C_e}$ form a $2$-$(2^m, 6, \lambda)$ 
design, where 
$$ 
\lambda=\frac{(2^m-4)(2^m-24)}{24}. 
$$ 
\end{theorem}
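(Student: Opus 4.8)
The statement has two parts: that the weight-$6$ supports form a $2$-design, and the value of $\lambda$. The first part is essentially free, since $\overline{\C_e}$ is affine-invariant (Theorem~\ref{thm-jan182}), so Theorem~\ref{thm-mymainjan18} already guarantees that the supports of the codewords of any weight $k$ with $\overline{A}_k \neq 0$ form a $2$-design. Hence the entire content is the computation of $\lambda$, together with the remark that $\overline{A}_6 > 0$. The plan is therefore to count blocks: because the code is binary, distinct weight-$6$ codewords have distinct supports, so the number of blocks is exactly $b = \overline{A}_6$. Equation~(\ref{eqn-tdesignblocknum}) with $t = 2$, $v = 2^m$, $k = 6$ then gives
\[
\lambda = \overline{A}_6 \, \frac{\binom{6}{2}}{\binom{2^m}{2}} = \overline{A}_6 \, \frac{30}{2^m(2^m-1)},
\]
so everything reduces to evaluating $\overline{A}_6$.

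To obtain $\overline{A}_6$ I would substitute $k = 6$ into the weight-distribution formula of Theorem~\ref{thm-desiredwtdist}. Since $k = 6$ is even, the first term is $2\binom{2^m}{6}$ and the middle term is $-\binom{2^{m-1}}{3} v$. In the convolution term, $i + j = 6$ even forces $(-1)^i = (-1)^j$, so the bracket $[(-1)^i + (-1)^j]$ collapses to $2(-1)^i$ and the sum becomes $2\sum_{i=0}^{6} (-1)^i \binom{a}{i}\binom{b}{6-i}$ with $a = 2^{m-1} - 2^{m/2}$ and $b = 2^{m-1} + 2^{m/2}$ (the upper limits are vacuous because $a, b \gg 6$). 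This signed Vandermonde convolution is the coefficient of $x^6$ in $(1-x)^a (1+x)^b$. Writing $(1-x)^a(1+x)^b = (1-x^2)^a (1+x)^{\,b-a}$ with $b - a = 2^{m/2+1}$ turns the extraction into the short finite sum $\binom{b-a}{6} - a\binom{b-a}{4} + \binom{a}{2}\binom{b-a}{2} - \binom{a}{3}$, an explicit polynomial in $2^{m/2}$.

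The remaining work is to insert $u = (2^m-1)2^{m-2}$ and $v = (2^m-1)(2^{m+1}-2^{m-1}+2)$ from (\ref{eqn-myuv}), combine the three contributions over the common factor $2^{2m+1}$, and simplify. The factor $2^m - 1$ should appear throughout and cancel against part of $2^m(2^m-1)$ in the denominator of $\lambda$, leaving a clean quadratic in $2^m$; I expect $\overline{A}_6 = \frac{(2^m-4)(2^m-24)\,2^m(2^m-1)}{720}$, whence $\lambda = \frac{(2^m-4)(2^m-24)}{24}$, which is positive for all $m \geq 6$. The main obstacle is purely the algebraic bookkeeping: several large binomial coefficients, written as polynomials in $2^{m/2}$, must be expanded and shown to telescope into this factored quadratic, and a single sign or power slip derails the cancellation. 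A useful consistency check is to run the same specialization at $k = 4$ and confirm that it reproduces $\overline{A}_4 = 2^{m-1}(2^m-1)/6$ and $\lambda = 1$ from Theorem~\ref{thm-mmmainresult}.
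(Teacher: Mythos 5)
Your proposal follows essentially the same route as the paper: the $2$-design property comes for free from affine-invariance via Theorem~\ref{thm-mymainjan18}, and $\lambda$ is obtained by substituting $k=6$ into the weight-distribution formula of Theorem~\ref{thm-desiredwtdist} to get $\overline{A}_6=\frac{2^m(2^m-1)(2^m-4)(2^m-24)}{720}$ and then applying $\lambda=\overline{A}_6\binom{6}{2}/\binom{2^m}{2}$, which is exactly what the paper does. Your expected value of $\overline{A}_6$ is correct (it checks out numerically, e.g.\ at $m=6$ it gives $13440$ and $\lambda=100$), and your generating-function organization of the convolution sum is a sound way to carry out the bookkeeping the paper leaves implicit.
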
 

\begin{proof}
Using the weight distribution formula $\overline{A}_k$ given in Theorem \ref{thm-desiredwtdist}, 
we obtain 
$$ 
\overline{A}_6= \frac{2^{m}(2^m-1)(2^m-4)(2^m-24)}{720}. 
$$ 
It then follows that 
$$ 
\lambda=\overline{A}_6 \frac{\binom{6}{2}}{\binom{2^m}{2}}=\frac{(2^m-4)(2^m-24)}{24}. 
$$ 
This completes the proof. 
\end{proof} 

\begin{theorem}\label{thm-wt8design}
Let $m \equiv 2 \pmod{4}$, $2 \leq e \leq \lfloor m/2 \rfloor$, and $\gcd(m, e)=2$. Then 
the supports of the codewords of weight $8$ in $\overline{\C_e}$ form a $2$-$(2^m, 8, \lambda)$ 
design, where 
$$ 
\lambda=\frac{(2^m-4)(2^{3m}-23\times 2^{2m}+344 \times 2^m -1612)}{720}. 
$$ 
\end{theorem}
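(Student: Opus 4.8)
The plan is to follow the same two-step recipe used in the proofs of Theorems~\ref{thm-mmmainresult} and~\ref{thm-wt6design}: the $2$-design property comes for free, and all that remains is to compute $\overline{A}_8$ and read off $\lambda$ from the block-count identity. Indeed, by Theorem~\ref{thm-jan182} the code $\overline{\C_e}$ is affine-invariant, so $\AGL(1,2^m) \subseteq \PAut(\overline{\C_e})$ acts doubly transitively, and Theorem~\ref{thm-mymainjan18} then guarantees that the supports of the codewords of every weight $k$ with $\overline{A}_k \neq 0$ form a $2$-design. Hence the whole content of the theorem is (i) to check $\overline{A}_8 \neq 0$, and (ii) to evaluate the parameter, which by (\ref{eqn-tdesignblocknum}) is forced to be
$$
\lambda = \overline{A}_8 \frac{\binom{8}{2}}{\binom{2^m}{2}} = \overline{A}_8 \cdot \frac{56}{2^m(2^m-1)}.
$$

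The first step is to specialise the master weight-distribution formula of Theorem~\ref{thm-desiredwtdist} to $k=8$. Since $8$ is even, the prefactors $1+(-1)^k$ and $\tfrac{1+(-1)^k}{2}$ equal $2$ and $1$ and $(-1)^{\lfloor 8/2\rfloor}=1$, so the first two terms contribute $2\binom{2^m}{8}$ and $\binom{2^{m-1}}{4}v$. Writing $a=2^{m-1}-2^{m/2}$ and $b=2^{m-1}+2^{m/2}$, every pair $(i,j)$ with $i+j=8$ satisfies $i\equiv j\pmod 2$, so $(-1)^i+(-1)^j=2(-1)^i$ and the convolution collapses to $2\sum_{i=0}^{8}(-1)^i\binom{a}{i}\binom{b}{8-i}$, which is twice the coefficient of $x^8$ in $(1-x)^a(1+x)^b$.

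The only genuinely computational point is evaluating this coefficient. Using $a+b=2^m$ and $b-a=2^{(m+2)/2}$, I would factor $(1-x)^a(1+x)^b=(1-x^2)^a(1+x)^{\,b-a}$ and extract $[x^8]$ as $\sum_{p=0}^{4}(-1)^p\binom{a}{p}\binom{b-a}{8-2p}$. A useful check is that this coefficient is invariant under the exchange $a\leftrightarrow b$ combined with $x\mapsto -x$, i.e.\ under $2^{m/2}\mapsto -2^{m/2}$; hence it is an even function of $2^{m/2}$ and therefore a genuine polynomial in $2^m$, with all odd powers of $2^{m/2}$ cancelling. Substituting $u=(2^m-1)2^{m-2}$ and $v=(2^m-1)(2^{m+1}-2^{m-1}+2)$ from (\ref{eqn-myuv}) and collecting everything over $2^{2m+1}$ yields, after simplification,
$$
\overline{A}_8=\frac{2^{m}(2^m-1)(2^m-4)\bigl(2^{3m}-23\cdot 2^{2m}+344\cdot 2^m-1612\bigr)}{8!}.
$$

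Finally, dividing by the index as prescribed by (\ref{eqn-tdesignblocknum}) cancels the factor $2^m(2^m-1)$ (note $8!/56=720$) and leaves precisely $\lambda=\frac{(2^m-4)(2^{3m}-23\cdot 2^{2m}+344\cdot 2^m-1612)}{720}$. For $m\equiv 2\pmod 4$ with $m\geq 6$ we have $2^m\geq 64$, which makes the cubic factor $2^{3m}-23\cdot 2^{2m}+344\cdot 2^m-1612$ (and hence $\overline{A}_8$) positive, so the design is nontrivial. The main obstacle is purely the degree-$8$ convolution bookkeeping and the ensuing algebraic simplification; no new conceptual difficulty arises beyond what was already handled for weights $4$ and $6$.
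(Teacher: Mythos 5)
Your proposal is correct and follows exactly the paper's route: invoke the $2$-design property from affine-invariance (Theorem~\ref{thm-mymainjan18}), specialise the weight-distribution formula of Theorem~\ref{thm-desiredwtdist} to $k=8$ to get $\overline{A}_8=\frac{2^{m}(2^m-1)(2^m-4)(2^{3m}-23\cdot 2^{2m}+344\cdot 2^m-1612)}{8!}$, and then apply (\ref{eqn-tdesignblocknum}). The only difference is that you spell out the binomial-convolution bookkeeping (via $(1-x)^a(1+x)^b=(1-x^2)^a(1+x)^{b-a}$) that the paper leaves implicit; your arithmetic, including $8!/\binom{8}{2}\cdot\frac{1}{2}=720$, checks out.
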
 

\begin{proof}
Using the weight distribution formula $\overline{A}_k$ given in Theorem \ref{thm-desiredwtdist}, 
we obtain 
$$ 
\overline{A}_8= \frac{2^{m}(2^m-1)(2^m-4)(2^{3m}-23\times 2^{2m}+344 \times 2^m -1612)}{2 \times 20160}. 
$$ 
It then follows that 
$$ 
\lambda=\overline{A}_8 \frac{\binom{8}{2}}{\binom{2^m}{2}}= 
\frac{(2^m-4)(2^{3m}-23\times 2^{2m}+344 \times 2^m -1612)}{720}. 
$$ 
This completes the proof. 
\end{proof} 

We point out that the main result in Theorem \ref{thm-mmmainresult} of this paper and 
Theorems \ref{thm-wt6design} and \ref{thm-wt8design} cannot be proved with the Assmus-Mattson 
Theorem due to the weight distribution of $\overline{\C_e}^\perp$ and the low minimum distance 
of $\overline{\C_e}$.  

When $m$ is odd and $\gcd(m, e)=1$, the code $\C_e$ and their relatives are also very interesting 
due to the following: 
\begin{itemize}
\item The code $\C_e$ and its dual $\C_e^\perp$ hold many infinite families of 
      $2$-designs. 
\item The extended code $\overline{\C_e}$ and its dual $\overline{\C_e}^\perp$ hold  
      many infinite families of $3$-designs.         
\end{itemize} 
These results were proved by the Assmus-Mattson Theorem, and the designs of those codes were 
covered in \cite{DingLi16}. 

When $m/\gcd(m,e)$ is even and $1 \leq e \leq \overline{m}$, one can find an algebraic 
expression of the weight distribution of the code $\overline{\C_e}$ with the weight distribution 
of $\overline{\C_e}^\perp$ depicted in Table \ref{tab-aaMinejan18} and Theorem \ref{thm-MI}, 
and then determine the parameters of some of the two designs held in $\overline{\C_e}$.

\subsection{Designs from some other codes $\C_E$ and their relatives} 

In Section \ref{sec-mainsect}, we treated the designs from the code $\C_{\{1+2^e\}}$ and its relatives. 
In this section, we provide information on designs from other codes $\C_E$ and their relatives. 

When $m \geq 5$ is odd and $E=\{(m-3)/2, (m-1)/2\}$ or $E=\{1, 2\}$, $\C_E$ has parameters 
$[2^m-1, 2^m-1-3m, 7]$ and $\overline{\C_E}$ has parameters $[2^m, 2^m-1-3m, 8]$. 
$\overline{\C_E}^\perp$ has dimension $3m+1$ and has six weights. In this case, $\C_E$ and 
$\C_E^\perp$ hold many infinite families of $2$-designs, while the codes $\overline{\C_E}$ and $\overline{\C_E}^\perp$ hold many infinite families of $3$-designs. These designed were treated in 
\cite{DingDesign16}.   

When $m \geq 4$ is even and $E=\{1, 2\}$, $\C_E$ does not hold $2$-designs. But $\overline{\C_E}$ 
and $\overline{\C_E}^\perp$ hold $2$-designs. The parameters of these $2$-designs were studied in 
\cite{DingZhouConf17}.  

When $m \geq 4$ is even and $E=\{(m-2)/2, m/2\}$, $\C_E$ has parameters $[2^m-1, 2^m-1-3m/2, 5]$, 
$\overline{\C_E}$ has parameters $[2^m, 2^m-1-3m/2, 6]$, and the weight distribution of 
$\overline{\C_E}^\perp$ is known \cite{Kasa69}. The parameters of the $2$-designs held in 
$\overline{\C_E}$ and $\overline{\C_E}^\perp$ are the same as those of the $2$-designs held 
in some codes in \cite{DingZhouConf17}. 

When $m \geq 7$ is odd and $E=\{(m-5)/2, (m-3)/2, (m-1)/2\}$, $\C_E^\perp$ has dimension $4m$ 
and has $7$ weights \cite{Kasa69}. It can be prove that $\C_E$ has parameters $[2^m-1, 2^m-1-4m, 7]$. 
The weight distribution of $\overline{\C_E}^\perp$ can be determined. Hence, the parameters of 
the $2$-designs held in $\overline{\C_E}^\perp$ and some of the $2$-designs held in 
$\overline{\C_E}$ can be worked out.

\section{Concluding remarks} 

While a lot of $t$-designs from codes have been constructed 
(see \cite{AK92, AM74, BJL, DingDesign16, DingLi16, KP95, Tonc93, Tonchev, Tonchevhb}, and the references therein), only a few constructions of infinite families of Steiner systems from codes are known in the literature. One of 
them is the Steiner quadruple systems $S(3, 4, 2^m)$ from the minimum codewords in the binary 
Reed-Muller codes $\cR_2(m-2, m)$. Another one is the Steiner triple systems $S(2, 3, 2^m-1)$ 
from the minimum codewords in the binary Hamming codes. This paper has now filled the gap of 
constructing an infinite family of Steiner systems $S(2, 4, v)$ from codes. We inform the reader that an infinite family of 
conjectured Steiner systems $S(2, 4, (3^m-1)/2)$ was presented in \cite{DingLi16}. It would be good if more infinite families of Steiner systems from error correcting codes could be discovered.

\end{document}